\documentclass[10pt,twoside]{article}
\usepackage[utf8]{inputenc}
\usepackage[T1]{fontenc}
\usepackage[english]{babel}
\usepackage{amsmath,amssymb,amsfonts,multido,amsthm}
\usepackage{pstricks}
\usepackage{titlesec}
\usepackage{indentfirst}

\usepackage{hyperref}
\usepackage{graphicx}
\usepackage{color}
\definecolor{matheonblue}{RGB}{0,0,50}
\definecolor{matheonlightblue}{RGB}{139,0,0}
\usepackage[inner=2.4cm,outer=2.65cm,bottom=3cm]{geometry}
\oddsidemargin-5mm
\titleformat*{\section}{\normalsize\bfseries}
\titleformat*{\subsection}{\normalsize\itshape}
\pagestyle{myheadings}

\DeclareMathOperator{\R}{\mathbb{R}}
\DeclareMathOperator{\Rc}{\mathcal{R}}
\DeclareMathOperator{\C}{\mathcal{C}}
\DeclareMathOperator{\S2}{\mathbb{S}^2}

\DeclareMathOperator{\ra}{\rightarrow}
\DeclareMathOperator{\de}{\text{d}}
\DeclareMathOperator{\tr}{tr}

\newcommand{\f}[1]{{\boldsymbol{ #1}}}

\newcommand{\pat}[2]{\frac{\partial #1}{\partial #2}}
\DeclareMathOperator{\di}{\nabla\cdot}

\newcommand{\curl}{\nabla \times}

\newcommand{\intet}[1]{\int_{\Omega}{ #1} \de \f x}
\newcommand{\sk}[1]{\left (\nabla \f {#1}\right )_{\mathrm{skw}}}
\newcommand{\dd}{\tilde{\f d}}
\newcommand{\skw}{\mathrm{skw}}
\newtheorem{proposition}{Proposition}[section]
\markboth{Nonstationary models for liquid crystals: A fresh mathematical perspective}{E.~Emmrich, S.~H.~L.~Klapp and R.~Lasarzik}
\date{Version \today}
\setcounter{tocdepth}{2}
\begin{document}

\author{Etienne Emmrich\footnote{Technische Universit\"{a}t Berlin,
Institut f\"{u}r Mathematik, Stra{\ss}e des 17.\ Juni 136,
10623 Berlin, Germany. Email: {\texttt emmrich@math.tu-berlin.de}}
\and
Sabine H.~L.~Klapp\footnote{Technische Universit\"{a}t Berlin,
Institut f\"{u}r Theoretische Physik,
Hardenbergstra{\ss}e 36, 10623 Berlin, Germany. Email: {\texttt klapp@physik.tu-berlin.de}}
\and
Robert Lasarzik\footnote{Technische Universit\"{a}t Berlin,
Institut f\"{u}r Mathematik, Stra{\ss}e des 17.\ Juni 136,
10623 Berlin, Germany. Email: {\texttt lasarzik@math.tu-berlin.de}}
}

\title{%
\begin{Large}
\textbf{Nonstationary models for liquid crystals: A fresh mathematical perspective}
\end{Large}
}
\maketitle
\begin{abstract}
\normalsize
In this article we discuss nonstationary models for inhomogeneous liquid crystals driven out of equilibrium
by flow. Emphasis is put on
those models which are used in the mathematics as well as in the physics literature, the overall goal
being to illustrate the mathematical progress on popular models
which physicists often just solve numerically.
Our discussion includes the Doi--Hess model for the orientational distribution function,
the $Q$-tensor model and the Ericksen--Leslie model which focuses on the director dynamics.
We survey particularly the mathematical issues (such as existence of solutions) and linkages between these models. Moreover,
we introduce the new concept of relative energies measuring the distance between solutions
of equation systems with nonconvex energy functionals and discuss possible applications of this concept for future studies.
\newline
\newline
{\em Keywords:
nematic liquid crystal, nonstationary models, mathematical analysis, relative entropy, nonconvex energy functional
}
\end{abstract}

\section{Introduction}

Since their discovery in the 1890s bei Reinitzer~\cite{reinitzer} and Lehmann~\cite{lehmann} (see also Heinz~\cite{heinz2} and Virchow~\cite{virchow} for earlier descriptions), liquid crystals continue to be one
of the most intriguing and fascinating classes of condensed matter, which nowadays 
have a plethora of applications in optics, photonics, and in material science (for a review, see Ref.~\cite{peter}).

Typical liquid crystals consist of rod-like ("prolate") or disk-shaped ("oblate")
organic molecules or colloidal particles. The name ``liquid crystal'' already suggests their intermediate role between
two more conventional states of matter. These are, on the one hand,
fully isotropic liquids, which lack of any (positional or orientational) long-range ordering and form the common case for most high-temperature atomic and molecular
fluids. On the other hand, crystals, which represent the typical low-temperature state of many materials, are characterized by three-dimensional positional 
(and possibly orientational) order.
Between these cases, liquid crystals are characterized by long-range orientational
ordering of the axes of the anisotropic particles without (or with only partial)
ordering of the positions of the center of mass. As a result, liquid crystals can {\em flow}. More generally, liquid crystals respond easily
to external thermal, mechanical, or optical perturbations and are therefore typical representatives of {\em soft} condensed matter systems. 
The unique structural and dynamical material properties of liquid crystals continues to attract an interdisciplinary community of physicists, chemists, material
scientists, and even (applied) mathematicians. This interest is recently also triggered by the important role
of liquid-crystal physics in the fields of biophysics (\emph{e.g.}, for the
structure of the cytoskeleton or the movement between actin and myosin, see Ahmadi~\emph{et\,al}.~\cite{christina}),
in active matter Ref.~\cite{marchetti}
 and in astrophysics (emergence of topological defects). Many of these contexts 
involve physical situations outside thermal equilibrium, where the material properties generally depend on time.
The purpose of the present article is to summarize modeling
approaches for such {\em nonstationary} (out-of-equilibrium) liquid crystals from both, a mathematical and a physical
perspective.

Clearly, the presence of orientational degrees of freedom makes the theoretical description of liquid crystals 
more complex than that of ordinary (atomic) fluids. This holds for microscopic ("bottom-up") approaches
such as classical density functional theory (see Ref.~\cite{brader,hansen}), but also for coarse-grained approaches
such as phase-field crystal modeling (see Ref.~\cite{wittkowski})
and for mesoscopic (continuum) approaches involving appropriate
order parameter fields (see Ref.~\cite{hess2015}) or even macroscopic variables, such as a stress tensor~(see Sec.~\ref{sect-el}). Such mesoscopic approaches
have become particularly popular for the description of liquid crystals under {\em flow}, a situation of major relevance
for many applications (see Ref.~\cite{zhou}). Mathematically,  continuum approaches for nonstationary liquid crystals
involve typically nonlinear coupled 
(partial) differential equations. While physicists just tend to solve these equations numerically and explore the emerging physical behavior, there are many
open problems from the mathematical (and numerical) side concerning, \emph{e.g.}, the existence and uniqueness of solutions.
From the physical side, this poses the danger of overseeing important dynamical features, while  from the mathematical
side, there is a certain risk to concentrate on too simplistic (or even unphysical) models. 

It is in this spirit that we here aim at giving an overview over some of the most relevant
nonstationary models that have been considered in {\em both}, the physics and the mathematics literature,
and to elucidate their challenges. These challenges become particularly apparent when treating {\em inhomogeneous} systems under flow: here
one is faced not only with the impact of the flow field on the structure of the liquid crystal, but also
{\em vice versa} with the structure-induced modification on the flow.  
For a recent review focusing on homogeneous situations alone, see 
Ref.~\cite{zhou}.

We concentrate on theories targeting the {\em nematic} state of liquid crystals. Indeed, 
depending on the degree of order one distinguishes different mesophases, the two simplest ones of which 
are illustrated in the left part of Fig.~\ref{figure}. 
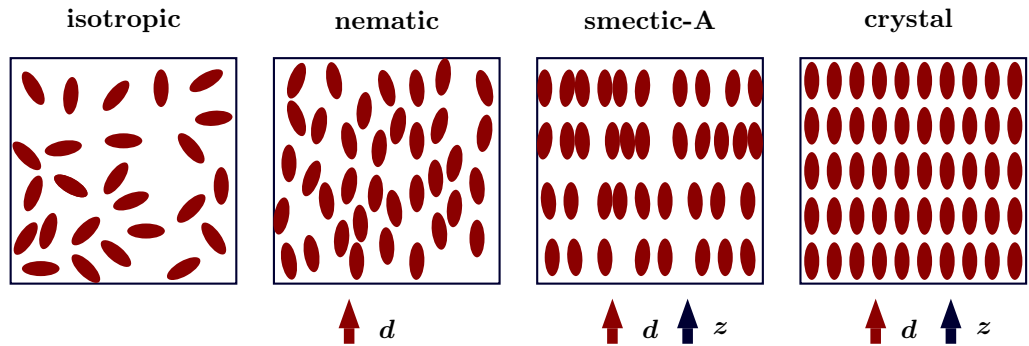
\begin{figure}[htb]
\begin{center}
\psset{xunit=1cm,yunit=1cm}
\begin{pspicture}(0,-0.4)(13.5,4)
\rput(1.5,3.5){\textbf{isotropic}}
\pspolygon[linecolor=matheonblue](0,0)(0,3)(3,3)(3,0)
\rput{90}(.4,.2){\psellipse*[linecolor=matheonlightblue](0,0)(.1,.25)}
\rput{45}(1,.2){\psellipse*[linecolor=matheonlightblue](0,0)(.1,.25)}
\rput{-33}(.2,.6){\psellipse*[linecolor=matheonlightblue](0,0)(.1,.25)}
\rput{163}(.5,.7){\psellipse*[linecolor=matheonlightblue](0,0)(.1,.25)}
\rput{50}(1.4,0.4){\psellipse*[linecolor=matheonlightblue](0,0)(.1,.25)}
\rput{-60}(2.3,0.2){\psellipse*[linecolor=matheonlightblue](0,0)(.1,.25)}
\rput{-45}(1,0.7){\psellipse*[linecolor=matheonlightblue](0,0)(.1,.25)}
\rput{-90}(1.8,0.7){\psellipse*[linecolor=matheonlightblue](0,0)(.1,.25)}
\rput{35}(2.7,0.6){\psellipse*[linecolor=matheonlightblue](0,0)(.1,.25)}
\rput{-20}(0.3,1.2){\psellipse*[linecolor=matheonlightblue](0,0)(.1,.25)}
\rput{60}(0.8,1.3){\psellipse*[linecolor=matheonlightblue](0,0)(.1,.25)}
\rput{45}(.215,1.7){\psellipse*[linecolor=matheonlightblue](0,0)(.1,.25)}
\rput{-80}(0.7,1.8){\psellipse*[linecolor=matheonlightblue](0,0)(.1,.25)}
\rput{-35}(1.4,1.4){\psellipse*[linecolor=matheonlightblue](0,0)(.1,.25)}
\rput{-70}(1.6,1.1){\psellipse*[linecolor=matheonlightblue](0,0)(.1,.25)}
\rput{-45}(2.4,1){\psellipse*[linecolor=matheonlightblue](0,0)(.1,.25)}
\rput{0}(2.8,1.3){\psellipse*[linecolor=matheonlightblue](0,0)(.1,.25)}
\rput{90}(1.5,1.9){\psellipse*[linecolor=matheonlightblue](0,0)(.1,.25)}
\rput{40}(2.4,1.8){\psellipse*[linecolor=matheonlightblue](0,0)(.1,.25)}
\rput{30}(0.3,2.6){\psellipse*[linecolor=matheonlightblue](0,0)(.1,.25)}
\rput{60}(0.8,1.3){\psellipse*[linecolor=matheonlightblue](0,0)(.1,.25)}
\rput{-5}(0.8,2.5){\psellipse*[linecolor=matheonlightblue](0,0)(.1,.25)}
\rput{-40}(1.4,2.5){\psellipse*[linecolor=matheonlightblue](0,0)(.1,.25)}
\rput{0}(2.0,2.6){\psellipse*[linecolor=matheonlightblue](0,0)(.1,.25)}
\rput{-60}(2.6,2.7){\psellipse*[linecolor=matheonlightblue](0,0)(.1,.25)}
\rput{-85}(2.7,2.2){\psellipse*[linecolor=matheonlightblue](0,0)(.1,.25)}

\rput(5,3.5){\begin{minipage}{3cm}\begin{center}
\textbf{{nematic}}
\end{center}
\end{minipage}}

\pspolygon[linecolor=matheonblue](3.5,0)(3.5,3.0)(6.5,3)(6.5,0)
\rput{10}(3.7,.3){\psellipse*[linecolor=matheonlightblue](0,0)(.1,.25)}
\rput{9}(4.0,.4){\psellipse*[linecolor=matheonlightblue](0,0)(.1,.25)}
\rput{-5}(4.4,.6){\psellipse*[linecolor=matheonlightblue](0,0)(.1,.25)}
\rput{0}(4.6,.3){\psellipse*[linecolor=matheonlightblue](0,0)(.1,.25)}
\rput{3}(5.0,.5){\psellipse*[linecolor=matheonlightblue](0,0)(.1,.25)}
\rput{-2}(5.4,.3){\psellipse*[linecolor=matheonlightblue](0,0)(.1,.25)}
\rput{7}(5.7,.7){\psellipse*[linecolor=matheonlightblue](0,0)(.1,.25)}
\rput{0}(6.2,.6){\psellipse*[linecolor=matheonlightblue](0,0)(.1,.25)}
\rput{-10}(3.6,.9){\psellipse*[linecolor=matheonlightblue](0,0)(.1,.25)}
\rput{6}(4.2,1.1){\psellipse*[linecolor=matheonlightblue](0,0)(.1,.25)}
\rput{3}(4.65,.8){\psellipse*[linecolor=matheonlightblue](0,0)(.1,.25)}
\rput{-4}(4.85,1.2){\psellipse*[linecolor=matheonlightblue](0,0)(.1,.25)}
\rput{9}(5.1,1.0){\psellipse*[linecolor=matheonlightblue](0,0)(.1,.25)}
\rput{0}(5.4,1.2){\psellipse*[linecolor=matheonlightblue](0,0)(.1,.25)}
\rput{2}(5.65,1.4){\psellipse*[linecolor=matheonlightblue](0,0)(.1,.25)}
\rput{-5}(5.9,1.0){\psellipse*[linecolor=matheonlightblue](0,0)(.1,.25)}
\rput{6}(6.2,1.3){\psellipse*[linecolor=matheonlightblue](0,0)(.1,.25)}
\rput{-8}(6.3,2.0){\psellipse*[linecolor=matheonlightblue](0,0)(.1,.25)}
\rput{0}(3.7,1.6){\psellipse*[linecolor=matheonlightblue](0,0)(.1,.25)}
\rput{-20}(4,1.4){\psellipse*[linecolor=matheonlightblue](0,0)(.1,.25)}
\rput{-10}(4.5,1.3){\psellipse*[linecolor=matheonlightblue](0,0)(.1,.25)}
\rput{20}(3.8,2.2){\psellipse*[linecolor=matheonlightblue](0,0)(.1,.25)}
\rput{10}(4.5,1.9){\psellipse*[linecolor=matheonlightblue](0,0)(.1,.25)}
\rput{-5}(4.9,1.8){\psellipse*[linecolor=matheonlightblue](0,0)(.1,.25)}
\rput{0}(5.4,1.9){\psellipse*[linecolor=matheonlightblue](0,0)(.1,.25)}
\rput{-10}(5.9,1.6){\psellipse*[linecolor=matheonlightblue](0,0)(.1,.25)}
\rput{-20}(3.8,2.7){\psellipse*[linecolor=matheonlightblue](0,0)(.1,.25)}
\rput{-10}(4.1,2.1){\psellipse*[linecolor=matheonlightblue](0,0)(.1,.25)}
\rput{10}(4.3,2.7){\psellipse*[linecolor=matheonlightblue](0,0)(.1,.25)}
\rput{-5}(4.7,2.3){\psellipse*[linecolor=matheonlightblue](0,0)(.1,.25)}
\rput{15}(5.0,2.6){\psellipse*[linecolor=matheonlightblue](0,0)(.1,.25)}
\rput{-10}(5.15,2.1){\psellipse*[linecolor=matheonlightblue](0,0)(.1,.25)}
\rput{5}(5.4,2.6){\psellipse*[linecolor=matheonlightblue](0,0)(.1,.25)}
\rput{-7}(5.75,2.75){\psellipse*[linecolor=matheonlightblue](0,0)(.1,.25)}
\rput{-15}(5.7,2.1){\psellipse*[linecolor=matheonlightblue](0,0)(.1,.25)}
\rput{15}(6.3,2.6){\psellipse*[linecolor=matheonlightblue](0,0)(.1,.25)}

\rput(8.5,3.5){\begin{minipage}{3cm}\begin{center}
\textbf{{smectic-A}}
\end{center}
\end{minipage}}
\pspolygon[linecolor=matheonblue](7,0)(7,3.0)(10,3)(10,0)
\rput{0}(7.2,.35){\psellipse*[linecolor=matheonlightblue](0,0)(.1,.25)}
\rput{4}(7.5,.35){\psellipse*[linecolor=matheonlightblue](0,0)(.1,.25)}
\rput{02}(7.9,.35){\psellipse*[linecolor=matheonlightblue](0,0)(.1,.25)}
\rput{-5}(8.4,.35){\psellipse*[linecolor=matheonlightblue](0,0)(.1,.25)}
\rput{0}(8.7,.35){\psellipse*[linecolor=matheonlightblue](0,0)(.1,.25)}
\rput{-3}(9.2,.35){\psellipse*[linecolor=matheonlightblue](0,0)(.1,.25)}
\rput{0}(9.5,.35){\psellipse*[linecolor=matheonlightblue](0,0)(.1,.25)}
\rput{05}(9.8,.35){\psellipse*[linecolor=matheonlightblue](0,0)(.1,.25)}

\rput{5}(7.15,1.1){\psellipse*[linecolor=matheonlightblue](0,0)(.1,.25)}
\rput{0}(7.45,1.1){\psellipse*[linecolor=matheonlightblue](0,0)(.1,.25)}
\rput{-3}(7.9,1.1){\psellipse*[linecolor=matheonlightblue](0,0)(.1,.25)}
\rput{0}(8.1,1.1){\psellipse*[linecolor=matheonlightblue](0,0)(.1,.25)}
\rput{06}(8.4,1.1){\psellipse*[linecolor=matheonlightblue](0,0)(.1,.25)}
\rput{0}(8.7,1.1){\psellipse*[linecolor=matheonlightblue](0,0)(.1,.25)}
\rput{03}(9.1,1.1){\psellipse*[linecolor=matheonlightblue](0,0)(.1,.25)}
\rput{-2}(9.4,1.1){\psellipse*[linecolor=matheonlightblue](0,0)(.1,.25)}
\rput{0}(9.8,1.1){\psellipse*[linecolor=matheonlightblue](0,0)(.1,.25)}

\rput{-7}(7.1,1.9){\psellipse*[linecolor=matheonlightblue](0,0)(.1,.25)}
\rput{0}(7.4,1.9){\psellipse*[linecolor=matheonlightblue](0,0)(.1,.25)}
\rput{3}(7.6,1.9){\psellipse*[linecolor=matheonlightblue](0,0)(.1,.25)}
\rput{0}(8.0,1.9){\psellipse*[linecolor=matheonlightblue](0,0)(.1,.25)}
\rput{4}(8.2,1.9){\psellipse*[linecolor=matheonlightblue](0,0)(.1,.25)}
\rput{0}(8.4,1.9){\psellipse*[linecolor=matheonlightblue](0,0)(.1,.25)}
\rput{5}(8.9,1.9){\psellipse*[linecolor=matheonlightblue](0,0)(.1,.25)}
\rput{-7}(9.2,1.9){\psellipse*[linecolor=matheonlightblue](0,0)(.1,.25)}
\rput{0}(9.45,1.9){\psellipse*[linecolor=matheonlightblue](0,0)(.1,.25)}
\rput{-3}(9.7,1.9){\psellipse*[linecolor=matheonlightblue](0,0)(.1,.25)}
\rput{4}(9.9,1.9){\psellipse*[linecolor=matheonlightblue](0,0)(.1,.25)}

\rput{0}(7.1,2.6){\psellipse*[linecolor=matheonlightblue](0,0)(.1,.25)}
\rput{-7}(7.4,2.6){\psellipse*[linecolor=matheonlightblue](0,0)(.1,.25)}
\rput{3}(7.6,2.60){\psellipse*[linecolor=matheonlightblue](0,0)(.1,.25)}
\rput{0}(7.9,2.6){\psellipse*[linecolor=matheonlightblue](0,0)(.1,.25)}
\rput{0}(8.1,2.6){\psellipse*[linecolor=matheonlightblue](0,0)(.1,.25)}
\rput{-5}(8.4,2.60){\psellipse*[linecolor=matheonlightblue](0,0)(.1,.25)}
\rput{0}(8.9,2.6){\psellipse*[linecolor=matheonlightblue](0,0)(.1,.25)}
\rput{03}(9.2,2.6){\psellipse*[linecolor=matheonlightblue](0,0)(.1,.25)}
\rput{-4}(9.6,2.60){\psellipse*[linecolor=matheonlightblue](0,0)(.1,.25)}
\rput{03}(9.9,2.60){\psellipse*[linecolor=matheonlightblue](0,0)(.1,.25)}


\rput(12,3.5){%
\begin{minipage}{3cm}%
\begin{center}%
{\textbf{crystal}}
\end{center}
\end{minipage}
}
\pspolygon[linecolor=matheonblue](10.5,0)(10.5,3)(13.5,3)(13.5,0)
\multido{\rr=0.3+0.6}{5}{
\multido{\i=0+1,\r=10.65+0.3}{10}{%
\rput{0}(\r,\rr){\psellipse*[linecolor=matheonlightblue](0,0)(.1,.25)}%
}%
}

\psline[linewidth=4pt,arrowinset=0.0,arrowsize=8pt,linecolor=matheonlightblue]{->}(4.5,-0.8)(4.5,-0.2)
\psline[linewidth=4pt,arrowinset=0.0,arrowsize=8pt,linecolor=matheonlightblue]{->}(8,-0.8)(8,-0.2)
\psline[linewidth=4pt,arrowinset=0.0,arrowsize=8pt,linecolor=matheonblue]{->}(9,-0.8)(9,-0.2)
\psline[linewidth=4pt,arrowinset=0.0,arrowsize=8pt,linecolor=matheonlightblue]{->}(11.5,-0.8)(11.5,-.2)
\psline[linewidth=4pt,arrowinset=0.0,arrowsize=8pt,linecolor=matheonblue]{->}(12.5,-0.8)(12.5,-0.2)
\rput(5,-0.6){
$\f d$
}
\rput(8.5,-0.6){
$\f d$
}
\rput(12,-0.6){
$\f d$
}
\rput(9.5,-0.6){
$\f z$
}
\rput(13,-0.6){
$\f z$
}

\end{pspicture}
\end{center}
\caption{Schematic illustrating the molecular structure in the most common phases of liquid crystals. The director is denoted by $\f d$ and the layer normal by $\f z$. \label{figure}}
\end{figure}
The isotropic state is characterized by translational and orientational disorder.  Upon increasing the concentration (lyotropic liquid crystal) or lowering temperature (thermotropic liquid crystal) the systems then enters the nematic state, see, \emph{e.g.}, Palffy-Muhoray~\cite{peter}.
Within the equilibrium nematic state, the rod-like molecules are randomly
distributed in space but tend to align along a common direction,
characterized by the so-called director $\f d$. The resulting system is much more permeable for light sent along 
the direction of the vector $\f d$ than perpendicular
to it. This is the fundamental principle underlying all liquid-crystals displays (LCDs). 
We note that in nonequilibrium, (induced, \emph{e.g.}, by flow) the symmetry of the originally nematic phase can break down in the sense that the orientational order becomes inhomogeneous or biaxial.  

In equilibrium smectic phases, 
the orientational alignment is supplemented by long-range one-dimensional
translational order, leading to layers of aligned particles.
The corresponding material density exhibits periodic peaks in one
direction (see vector $\f z$ in Fig.~\ref{figure}).
Both, nematic and smectic phases phases have relevance in material science, such as, \emph{e.g.}, for liquid
crystal elastomers which deform as a reaction to thermal, chemical or optical
excitement (see Ref.~\cite{peter}). It is clear, however, that the
one-dimensional translational order in the smectic phase poses additional challenges
for the theoretical description. We thus focus on systems which are nematic in equilibrium.

The remainder of this article is structured as follows. We start by 
shortly reviewing the Doi model (Sec.~\ref{sect-doi}),
which provides a "kinetic" equation for the dynamical evolution
of the orientational distribution function of the system.
We then proceed to the $Q$-tensor (Sec.~\ref{sect-q}) and, as a special case (valid for uniaxial nematics),
the Ericksen--Leslie model (Sec.~\ref{sect-el}), both of which focus on the dynamics
of an order parameter. For all of these model,
we consider both homogeneous and inhomogeneous situations, and
we discuss their advantages and drawbacks from the mathematical and physical
point of view. Special emphasis is put on the existence of generalized solutions
 of the governing equations.  
 In Sec.~\ref{sect-rel}, we then
comment on cross-linkages between the three models. 
In Sec.~\ref{sect-rela}, we then introduce a new piece of mathematical analysis for nonstationary models of liquid crystals. Specifically, we introduce the concept of relative energies to measure the distance of possible solutions. We prove an inequality showing the continuous dependence of the relative energy on the difference of the initial values.
The article closes
with a brief outlook in Sec.~\ref{sect-open}.

\paragraph{Notation} For two vectors $\f a, \f b \in \mathbb{R}^3$, we denote by
$\f a\cdot \f b = \f a^T \f b$ the Euclidean inner product, by $\f a \times \f b$ the vector or outer
product, and by $\f a\otimes \f b = \f a \f b^T$ the dyadic product.
The identity in $\mathbb{R}^{3\times 3}$ is denoted by $I$. For $A, B \in \mathbb{R}^{3\times 3}$, let
$A:B = \sum_{i,j=1}^3 A_{ij} B_{ij}$ be the Frobenius inner product.
For a vector field $\f v$, we denote by $(\nabla \f v)_{\mathrm{sym}}= \tfrac{1}{2}\left(\nabla \f v +
(\nabla \f v) ^T\right)$ the symmetric and
by $(\nabla \f v)_{\mathrm{skw}}= \tfrac{1}{2}\left(\nabla \f v - (\nabla \f v) ^T\right)$
the skew-symmetric part of its gradient $\nabla \f v$. With $\curl$ we denote the curl operator.
By $\Omega$ we denote a bounded domain with sufficiently smooth boundary.

\section{The Doi model}\label{sect-doi}

From a microscopic point of view, the liquid-crystalline systems can be considered
as an ensemble of (rigid or flexible) rods which perform thermal motion and interact pairwise via repulsive 
(''steric") and, possibly, additional attractive (\emph{e.g.}, van-der Waals) forces. These pairwise forces can then be supplemented
by coupling to a flow field or other external perturbations.
Within the Doi model (see Refs.~\cite{kuzuu,doi}), which is sometimes
also referred to as Doi--Hess model
(see Beris and Edwards~\cite[p.~463]{beris}) due to the parallel work of Hess (see Refs.~\cite{Hess} and~\cite{hess2015}),
the fluctuating ensemble of rods is described by a (nonnegative) probability distribution function 
$f=f(\f x, \f n  , t)$ where $f$ is the probability density 
that at time $t$ a molecule at point $\f x$ is aligned in
direction $\f n $. The latter is an
element of the unit sphere $\S2$ in $\R^3$ such that $|\f n | = 1$.
Here and henceforth we assume the particles themselves to have uniaxial symmetry, such
that one vector is fully sufficient to describe their orientation.

The key element of the Doi--Hess approach is a Fokker--Planck
equation for the probability density $f=f(\f x, \f n  , t)$ describing its temporal (or spatio-temporal) evolution. 
Mathematically, these are (partial) differential equations of first order in time.
 Note that the first-order nature implies
that the motion of the molecules is assumed to be ''overdamped'', a situation which typically occurs if the rods are suspended
in a viscous solvent. Due to this assumption one often rather refers to a Smoluchowski (see Ref.~\cite{dhont}) than a Fokker--Planck equation.

Before discussing these dynamical equations, we first briefly state two important properties
of $f$: First, $f$ is normalized such that
\begin{equation}\label{normed}
\int_{\mathbb{S}^2}f(\f x, \f n , t) \de \f n  = 1
\end{equation}
for all $\f x \in \Omega$ and all $t\ge 0$, where $\Omega \subset \mathbb{R}^3$
denotes the spatial domain occupied by the liquid crystal.
For a fully isotropic and homogeneous fluid, where the directions and positions are uniformly distributed and $f$ is just a constant,
Eq.~(\ref{normed}) immediately implies that $f=1/(4\pi)$.

A second important property of the distribution function $f$ concerns its symmetry:
Because of the head-to-tail symmetry characterizing many (nonpolar) rod-like molecules, one 
typically cannot
distinguish between the directions $\f n $ and $-\f n $. It would thus be convenient to
consider the manifold $\R\!P^2$, the real projective plane in  $\R^3$, instead of
$\mathbb{S}^2$. Since this manifold is, however, nonorientable,
one rather endows the function $f$ with the  symmetry assumption
\begin{equation}
f(\f x, \f n  ,t) = f ( \f x ,- \f n ,t)
\end{equation}
for all $(\f x, \f n  ,t) \in \Omega \times \S2 \times [0,\infty)$.
%
%
\subsection{Homogeneous systems: Doi--Onsager model}\label{sect-doi-hom}

In the absence of any spatial structure, that is, for a spatially
homogeneous system, the probability density $f$ reduces 
to a function of the orientation (and time) such that
$f= f ( \f n  ,t )$. In this case, the Doi model is often referred to as the Doi--Onsager model (see, \emph{e.g.}, Zhang and Zhang~\cite{zhangreview}).

The main orientational phenomenon for a homogeneous liquid-crystalline system in {\em thermal equilibrium}
is the phase transition between the isotropic state (where $f=1/(4\pi)$)
and the nematic state, where $f$ is a nontrivial function of the
orientation (yet stationary in time). In his pioneering work of 1949, 
Onsager~\cite{onsager} proposed the first statistical-physical theory
of the equilibrium isotropic-nematic transition in a system of
rods with purely repulsive interactions. Within this theory,
the transition results from a competition between the "excluded volume" 
interactions (originating from the mutual repulsion of the rods) and the orientational
entropy. 
We will come back to some elements of the Onsager theory later in this section.

Here we proceed by first stating, in a quite general form, the governing equation
describing the temporal evolution of the probability density $f$ (often called "kinetic" equation) in the homogeneous case.
In writing this equation, we allow for an orientation-dependent potential $V=V( \f n , f( \f n  ,t))$ affecting
the orientation of the rods, and for a flow velocity field $\f v$ such that $\nabla \f v$ is constant.
The governing equation then reads 
\begin{equation}\label{doi}
\pat{f(\f n , t)}{t}= \frac{1}{\text{De}}\,\mathcal{R} \cdot
\Big( \mathcal{R} f(\f n , t) + f(\f n , t)\mathcal{R} V(\f n  , f (\f n , t))\Big)
 - \mathcal{R}\cdot \Big( \f n  \times \nabla \f v \f n  f(\f n , t) \Big) \, .
\end{equation}
Here,  $\Rc$ is a rotational differential operator to be specified below, and
the nonnegative constant $\text{De}$ is the so-called Deborah number, which
quantifies the ratio between the rotational Brownian motion
of the molecules and the motion due to convection (note that Eq.~(\ref{doi}) is already in dimensionless form,
see Yu and Zhang~\cite{yuzhang} for the scaling arguments).

The first term on the right-hand side of Eq.~(\ref{doi}), ${\text{De}}^{-1} \Rc \cdot \Rc f$,
describes the rotational diffusion of the rods due to rotational Brownian motion.
The operator $\Rc$ depends on $\f n $ and is defined as
\begin{equation}
\Rc  = \f n  \times \pat{}{\f n } \, ,
\end{equation}
it can be seen as the gradient with respected to $\f n $ restricted to the
sphere. The composition $\Rc\cdot \Rc$ then represents the Laplace--Beltrami
operator on the sphere $\mathbb{S}^2$ (see also Zhang and Zhang~\cite{zhangreview}).
With $|\f n | = 1$, we find
\begin{align}
\begin{split}
\Rc \cdot \Rc f&= \sum_{j=1}^3 \left( 1 - n_j^2 \right) \frac{\partial^2 f}{\partial n_j^2}
- \sum_{i , j =1, i\neq j}^3 n_i n_j \frac{\partial^2 f}{\partial n_i \partial n_j}
- 2 \sum_{j=1}^3 n_j \frac{\partial f}{\partial n_j}
\\
& = \Delta_{\f n } f - (\f n  \otimes \f n ) : \nabla_{\f n }^2 f - 2 \f n  \cdot \frac{\partial f}{\partial \f n }
\,  ,
\end{split}
\end{align}
where $\Delta_{\f n } $ is the Laplacian and $\nabla_{\f n }^2$ is the Hessian with respect to $\f n $. In spherical coordinates we have
$\f n  = [\cos \phi \sin \theta , \sin \phi \sin \theta, \cos \theta]^T$ with $\phi \in [0,2\pi)$ and $\theta \in [0,\pi)$. The diffusion term
then becomes
\begin{equation}
\Rc \cdot \Rc f = \frac{1}{\sin^2 \theta}\, \frac{\partial^2 f}{\partial \phi^2} + \frac{\partial^2 f}{\partial \theta^2} + \cot \theta\,
\frac{\partial f}{\partial \theta} \, .
\end{equation}

We now consider the second term on the right-hand side of Eq.~(\ref{doi})
which involves the potential $V$ describing effects of alignment.
This potential is, in general, a function of the direction $\f n $ and the probability density $f$,
\emph{i.e.}, $V = V(\f n , f(\f n , t))$.
Many different
choices of $V$ have been suggested in the literature. 
Here we follow Kuzuu and Doi~\cite{kuzuu} 
who write the potential as a sum of two parts,
$V = V_{MI} + V_H$, where $V_{MI}$ models 
pair interactions on a mean-field level,
and $V_H$ accounts for the interaction of each rod with an external magnetic field (if present). 

The latter term is a single-particle contribution and thus depends only on $\f n $.
It can be written as (see also Ref.~\cite[Sec.~3.2.1]{gennes}),
\begin{equation}\label{VH}
V_H ( \f n  ) =  - \frac{1}{2} (\chi_\|-\chi_\bot) ( \f n  \cdot \f H )^2 \, ,
\end{equation}
where $\f H$ is the externally controlled magnetic field and $\chi_\|$ and $ \chi_\bot $ are
the (magnetic) susceptibilities parallel to the molecular axis $\f n $
and perpendicular to $\f n $, respectively.

The mean-field potential $V_{MI} $ describes the (effective) interaction of the particles and
is given by
\begin{equation}
V_{MI}( \f n , f) =
\int_{\mathbb{S}^2} \beta( \f n , \boldsymbol{\hat{n}}) f ( \boldsymbol{\hat{n}} ,t) \de
\boldsymbol{\hat{n}}
\, ,\label{VMI}
\end{equation}
where the integral kernel $\beta=\beta( \f n , \boldsymbol{\hat{n}})$
describes the interaction of two molecules pointing in the directions $\f n  $ and
$\boldsymbol{\hat{n}}$. Again, different choices of $\beta$
can be found in the literature. In the original Onsager model~\cite{onsager},
the kernel $\beta$ is given by
\begin{equation}\label{onsager}
\beta( \f n , \boldsymbol{\hat{n}}) = \alpha\, |\f n  \times \boldsymbol{\hat{n}}|
\end{equation}
with a positive constant
$\alpha$. The latter represents the coupling strength (see Refs.~\cite{gennes,kuzuu}) and depends
on quantities characterizing the microscopic configuration. Another  kernel function that is often studied is due to Maier and
Saupe (see Ref.~\cite{MeierSaupe}),
\begin{align}\label{MaierSaupe}
\beta ( \f n , \boldsymbol{\hat{n}} )
\sim -  (\f n  \cdot \boldsymbol{\hat{n}} )^2 = | \f n  \times \f{\hat{n}}|^2 -1 \, .
\end{align}
This expression can be considered as an approximation of the cross product (for small angle) in the corresponding Onsager term
(\ref{onsager}). A third example is the "dipole-like" potential given by
\begin{align}\label{dipol}
\beta ( \f n , \boldsymbol{\hat{n}}) \sim -  \f n  \cdot \boldsymbol{\hat{n}} \, .
\end{align}
We stress that, from a physical perspective, Eq.~(\ref{dipol}) corresponds to a mean-field
version of the full dipole-dipole potential, which involves an additional dependency
on the connecting vector between the particles (see Ref.~\cite{jackson}). Still, Eq.~(\ref{dipol})
reflects one key feature of the interaction between molecules are possessing a dipole moment (see Fatkullin and Slastikov~\cite{fatkullin}):
That is, the interaction potential changes its sign if the direction of one of the molecules is reversed.
This is in contrast to the Maier--Saupe and Onsager potential. 

The last term on the right-hand side in Eq.~\eqref{doi}
models the impact of an imposed flow profile onto the
alignment of the molecules. Note that since $f$ was assumed to be independent of $\f x$
(homogeneity),  the velocity gradient
$\nabla \f v$ needs to be constant. Explicitly, the cross product
yields
\begin{equation}
\f n  \times \nabla \f v  \f n  = \sum_{j=1}^3
\begin{bmatrix}\displaystyle
 n_2  \, \pat{v_{3}}{x_j}  -  d_3 \, \pat{v_2}{x_j}\\[0.5ex]
 \displaystyle
 n_3  \, \pat{v_{1}}{x_j} -  d_1\, \pat{v_{3}}{x_j}\\[0.5ex]
 \displaystyle
 n_1 \, \pat{v_{2}}{x_j} -  d_2 \, \pat{v_{1}}{x_j}
\end{bmatrix}
n_j.
\end{equation}
Combined with the application of the rotational operator one obtains (with $|\f n |=1$)
\begin{align}
\begin{split}
\Rc \cdot \Big(\f n  \times \nabla \f v \f n  f\Big) &=
\left (\f n  \times \pat{}{\f n } \right ) \cdot
\left ( \f n  \times \nabla \f v \f n  f \right ) \\
&= \left ( \f n  \cdot \f n  \right ) \left ( \pat{}{\f n }
\cdot \nabla \f v \f n  f \right )  -
  \pat{}{\f n } \cdot \f n   \left ( \f n  \cdot \nabla \f v \f n  f \right ) \\
 & = \pat{}{\f n }\cdot \big(\left ( | \f n |^2 I - \f n  \otimes \f n  \right ) \nabla \f v \f n  f \big)
 \\
 & = (\nabla \cdot \f v) f - 3 \f n  \cdot \nabla \f v \f n   f
+   (\nabla \f v \f n )\cdot   \left ( I - \f n  \otimes \f n  \right ) \pat{}{\f n }f
 \, .
 \end{split}
\end{align}
Having discussed the physical interpretation of the different terms in Eq.~(\ref{doi}),
we close with some more mathematical remarks.
Importantly, the evolution equation (\ref{doi})
can be interpreted as the {\em gradient flow} of a certain (free) energy functional
with respect to the Wasserstein metric (see Villani~\cite{opttrans}). 
To be precise, consider the free energy functional 
\begin{equation}\label{free}
F[f]= 
  \int_{\mathbb{S}^2} \Big( f(\f n , t)
\left( \ln f(\f n ,t) - 1 \right)
+ \frac{1}{2} V_{MI}( \f n ,f( \f n ,t)) f( \f n  ,t)
+V_H ( \f n  ) f ( \f n ) \Big )\de \f n 
  \, ,
\end{equation}
where the notation $F[f]$ indicates the functional dependence of $F$ on the probability density.
Performing the variational derivative of $F$ with respect to $f$
one obtains (using the definition of $V_{MI}$ in Eq.~\eqref{VMI} and the symmetry of $\beta$)
\begin{align}
\frac{\delta F	}{\delta f}[f] = \ln f + V_{MI}+ V_H=\ln f + V  \, .
\end{align}
Here $\delta F /\delta f$ denotes the variational derivative of $F$ with respect to $f$ (see Ref.~\cite{furihata}).
Consider now the first two terms on the right-hand side of Eq.~(\ref{doi}) or, equivalently, the evolution equation in
the absence of (or for a constant) velocity field $\f v$. 
By using the functional derivative given above
it is easy to see that
\begin{align} 
\label{doions_2}
\begin{split}
\frac{\partial f}{\partial t} & = 
\frac{1}{\text{De}}\Rc \cdot \Big( \Rc f + f \Rc V \Big)\\
&= \frac{1}{\text{De}}\Rc \cdot \left ( f  \Rc \left ( \frac{\delta F}{\delta f }[f]\right )\right ) \, .
\end{split}
\end{align}
The first term on the right-hand side of Eq.~\eqref{doi} with the interaction potential $V$ 
can thus be interpreted as the gradient flow of the free energy functional $F$~\eqref{free} with respect to the Wasserstein metric.
In other words, the free energy functional is a Lyapunov-type function for Eq.~\eqref{doi}.

We note in passing that the first member of Eq.~(\ref{doions_2}) is also referred to as Smoluchowski
equation on the sphere (see Ref.~\cite{existencedoi}).
Global existence of a solution to Eq.~\eqref{doions_2} with the Maier--Saupe
potential~\eqref{MaierSaupe}
has been shown in Ref.~\cite{existencedoi}  for initial data given by
a nonnegative continuous function on $\S2$.
Moreover, it has been shown that solutions are nonnegative and normalized.
Furthermore, energy estimates and the typical structure of steady-state solutions
have been studied.  
Many other authors have also studied Eq.~\eqref{doions_2} from the mathematical point of view
(for a review, see Zhang and Zhang~\cite{zhangreview}). 
In particular, Zhang and Zhang~\cite{zhangreview} consider the three
above-mentioned interaction kernels \eqref{onsager}, \eqref{MaierSaupe}, and
\eqref{dipol}. Other results concerning the homogeneous case can be found
in Refs.~\cite{fatkullin,Constantin,zhangsymmetry}.

Finally, it is interesting to note that even the full evolution equation (\ref{doi}) for homogeneous systems,
which involves a coupling to a flow field with spatially constant $\nabla \f  v$,
can be written as gradient flow, as long as $\nabla \f v $ is symmetric, \emph{i.e.}, $\nabla \f v = ( \nabla \f v )_{\text{sym}}$. This can be achieved by adding to the free energy functional (\ref{free})
the ''dissipation'' potential 
\begin{equation}
W[ f] =\frac{1	}{2}  \int_{\mathbb{S}^2}( \f n  \cdot (\nabla \f  v)_{\text{sym}} \f n  ) f \de \f n  \, .
\end{equation}

A typical case with constant derivative of the
velocity field is a system in a planar Couette shear flow.
Kuzuu and Doi~\cite{kuzuu} derived from this model the homogeneous Ericksen--Leslie equation (see Sec.~\ref{sect-el}) by taking into account perturbations
of the equilibrium state at a
small Deborah number, $\text{De}$. 
%
\subsection{Inhomogeneous Doi model}

Inhomogeneous flow situations occur, \emph{e.g.}, as a result of an externally imposed flow field with velocity gradient depending on the
spatial variable $\f x$.
Another intriguing possibility is a {\em spontaneous} breaking of the symmetry, resulting
in the formation of shear bands (see Ref.~\cite{cates}).

For inhomogeneous systems, the probability density $f$
depends on $\f x$ (as well as on orientation $\f n $ and time $t$), and the dynamical evolution equation \eqref{doi} is no longer valid.
The required modifications of Eq.~(\ref{doi}) concern, first, the 
incorporation of translational diffusion (and other translational effects, see Ref.~\cite[Sec.8]{doi}).
To be specific, the governing kinetic equation for $f$ in the inhomogenous case is given by
Ref.~\cite{existinstatdoi}
\begin{align}
\begin{split}
\pat{f}{t} + ( \f v \cdot \nabla ) f = &
\frac{\varepsilon^2}{\text{De}}\di \Big(\left (D_\|^* \f n  \otimes \f n  + D_\bot^*
\left ( I- \f n  \otimes \f n  \right )\right )
 \left ( \nabla f + f\nabla V\right )\Big)
\\&+\frac{1}{\text{De}} \mathcal{R} \cdot \left ( \mathcal{R} f +f\mathcal{R} V\right )
- \mathcal{R}\cdot ( \f n  \times \nabla \f v \f n   f ) \, .
\end{split}
\label{instatdoi}
\end{align}
In the first term on the right-hand side, the parameter $\varepsilon$ quantifies the ratio between the lengths of the rods and a characteristic length
describing the spatial extension of the flow region. Further, $D_\|^*$ and $D_\bot^*$ are shape dependent, translational diffusion constants
characterizing the diffusion parallel and perpendicular to the molecular axis $\f n $, respectively.
Typically one has $D_\|^* > D_\bot^*$ (see Ref.~\cite{dhont}), that is, a molecule moves more easily parallel (than perpendicular) to $\f n $, consistent with the
na\"ive perception.
Equation~(\ref{instatdoi}) also allows for a spatial dependence 
of the alignment potential $V=V(\f x, \f n , f(\f x, \f n  , t))$ (see Sec.~\ref{sect-doi-hom} where we introduced this quantity for the homogeneous case),
which may occur, \emph{e.g.}, through an inhomogeneous magnetic field, or through a 
spatial dependence of the kernel of the mean-field interaction $V_{MI}$.
Usually one assumes this kernel (now called  $\widetilde{\beta}$)
to be translationally invariant Ref.~\cite{zhang}, which
implies that $\widetilde{\beta}$ depends on $\f x - \boldsymbol{\hat{x}}$ instead of
$\f x$ and $\boldsymbol{\hat{x}}$ separately. A typical ansatz
reads Ref.~\cite{existinstatdoi} 
\begin{align}
\widetilde{\beta}( \f x - \hat{\f x}, \f n  , \hat{\f n  })
:= \, \chi\left (\f x - \hat{\f x}\right ) \,
\beta ( \f n  ,\hat{\f n  }),
\end{align}
where $\chi$ denotes a suitable mollifier\footnote{Consider the compactly supported smooth function $\rho\in\C_c^\infty(\R^d)$, defined by $ \rho(x) = \begin{cases} c \exp\left (-\frac{1}{1-| x|^2}\right ) & \mbox{if}~ |x|<1 \\
0 & \mbox{else}
\end{cases}\,.$
The constant $c$ is chosen such that $ \int_{\R^d} \rho(x)\de x = 1$. The suitable mollifiers $\chi $ are then defined by $ \chi(x) =(1/L^3) \rho(x / L)$. The constant $L$ describes the interaction radius of the molecules in the material. 
} 
modeling the range of interaction of the rods, and
the orientation-dependent function $\beta$ appearing on the right-hand side is given, \emph{e.g.}, by one of the ansatzes \eqref{onsager},~\eqref{MaierSaupe}, or~\eqref{dipol}.
The resulting mean-field potential
$V_{MI}$ then becomes
\begin{align}
V_{MI}(\f x, \f n , f)
= \int_{\Omega} \int_{\mathbb{S}^2}
\widetilde{\beta}( \f x - \boldsymbol{\hat{x}} , \f n  , \boldsymbol{\hat{n}})
f ( \boldsymbol{\hat{x}} , \boldsymbol{\hat{n}} ,t) \de \boldsymbol{\hat{n}}
\de \boldsymbol{\hat{x}} \, .
\end{align}

The second main modification (as compared to the kinetic description of homogeneous flow)
is that the evolution equation for $f=f(\f x,\f n ,t)$ has to be supplemented
by an equation of motion for the velocity field $\f v$.
The latter can be derived from the conservation of momentum and the incompressibility condition resulting
from the conservation of mass. This results in a 
Navier--Stokes-like system of equations given by
\begin{align}
\begin{split}
\pat{\f v}{t} + ( \f v \cdot \nabla )\f v + \nabla p &= \di \sigma + \f b\, ,\\
\di \f v &=0 \, .
\end{split}
\label{navdoi}
\end{align}
Here $p$ denotes the isotropic (scalar) pressure (up to an additive constant), 
$\sigma$ is the (additional) stress tensor describing frictional effects (see below),
and $\f b$ is a body force per unit mass.

We note that, for highly viscous fluids, the first member of Eq.~\eqref{navdoi} is often replaced by its limiting form valid for low Reynolds numbers (overdamped limit). In this limit, the entire material derivative, that is, the time derivative and the nonlinear term in the velocity, are set to zero, yielding the so-called stationary Stokes equation.

Irrespective of the choice for the material derivative,
the stress tensor $\sigma$ of a complex fluid like a liquid crystal is commonly written as
 a sum of an
elastic and a viscous part, \emph{i.e.},
\begin{align}
\sigma=\sigma_{\text{elast}} + \sigma_{\text{visc}},
\end{align}
where the viscous part $\sigma_{\text{visc}}$ is given by (see Ref.~\cite[Sec.8.6]{doi} for a derivation)
\begin{align}\label{sigmav}
\sigma_{\text{visc}} = 2 \nu (\nabla \f v)_{\mathrm{sym}} + \frac{1}{2}\, \xi_r \int_{\mathbb{S}^2}
\left( (\nabla \f v)_{\mathrm{sym}} : \f n  \otimes \f n \right ) \f n  \otimes \f n  f( \cdot ,\f n , \cdot) \de \f n \, ,
\end{align}
with $\nu$ and $\xi_r$ being the viscosity of the fluid and the rotational friction constant, respectively.
The dots in the integral indicate the dependence of $f$ on $\f x$ and $t$, which are also transferred to the tensors $\sigma_{\text{visc}}$ and $(\nabla \f v)_{\mathrm{sym}}$.
The first term on the right-hand side of Eq.~\eqref{sigmav} 
describes the isotropic friction; it already occurs in an atomic fluid and thus also appears in the
standard Navier--Stokes equation.
Note that $2\nabla\cdot (\nabla \f v)_{\mathrm{sym}} = \Delta \f v$ for an incompressible fluid.
The second term on the right-hand side of Eq.~\eqref{sigmav} represents the
additional friction due to the motion of the rod-like molecules.
Depending on the physical state considered, the viscous stress tensor can still be a function
of $\f x$ and $t$.

For the elastic part of the stress tensor, Wang {\em et\,al}. \cite{wangzhang} have suggested the expression
\begin{equation}
\sigma_{\text{elast}} = - \int_{\mathbb{S}^2} \left (\f n  \times \left (\Rc \frac{\delta F}{\delta f}[f]\right )
\right )\otimes \f n  f (\cdot, \f n  \cdot) \de \f n ,\label{sigmaelast}
\end{equation}
and for the body force appearing in Eq.~\eqref{navdoi}
\begin{equation}
\quad \f b = \int_{\mathbb{S}^2} \nabla\frac{\delta F}{\delta f }[f]  \, f( \cdot , \f n  ,\cdot ) \de \f n  \, .
\end{equation}

The system of equations for the inhomogeneous flowing liquid crystals given by Eqs.~\eqref{instatdoi} and \eqref{navdoi} has
been studied in a number of mathematically oriented publications.
Specifically, Zhang and Zhang~\cite{existinstatdoi} have shown local existence as
well as global existence for small Deborah number
and large viscosity.
Numerical simulations for some special cases such as
 plane Couette and Poiseuille flow have been presented in Ref.~\cite{yuzhang}.

In an earlier study,  Doi~\cite{earlydoi} considered a simplified case of Eq.~\eqref{instatdoi} where 
the molecular interactions are neglected, \emph{i.e.},the potential $V$ is assumed to be constant, and
 the translational
diffusion is isotropic (\emph{i.e.}, $D_\|^* = D_\bot ^*$). 
Eq.~\eqref{instatdoi} then reduces to (see also Ref.~\cite{otto})
\begin{align}
\pat{f}{t}+ ( \f v\cdot\nabla) f=& - \pat{}{\f n } \cdot
\left ((\nabla \f v \f n  - ( \f n  \cdot \nabla \f v \f n ) \f n   ) f \right )
+ D \Delta f + \frac{1}{\text{De}} \Rc \cdot \Rc f\, ,
\label{fokker}
\end{align}
where $D= D_\|^*\varepsilon^2 / \text{De}$.
This equation~\eqref{fokker} represents again an overdamped Fokker--Plank equation (see, \emph{e.g.}, Ref.~\cite{fokker}). 
If $V$ is assumed to be constant, the elastic part of the stress tensor~\eqref{sigmaelast} could be calculated using integration by parts:
\begin{align}
\sigma_{\text{elast}} ={}& - \int_{\mathbb{S}^2} \left ( \f n  \times \Rc f \otimes \f n  \right ) \de \f n  
 =- \int_{\mathbb{S}^2} f \pat{}{\f n  } \cdot  \left ( \f n  \otimes  (| \f n  |^2 I-\f n  \otimes \f n   ) \right ) \de \f n  
= 3 \int_{\mathbb{S}^2} \left ( \f n  \otimes \f   n - \frac{1}{3} I  \right ) f \de \f n  \,.\label{sigmaelastsim}
\end{align}
More general expressions for the stress tensors can be found
in Refs.~\cite{doi} and~\cite{fokker}.
The definition~\eqref{sigmaelastsim} is very similar to the definition of the $Q$-tensor in Eq.~\eqref{Q} below.
For the body force $\f b$ we infer, since $f$ is a probability measure (see Eq.~\eqref{normed}), that
 \begin{align}
 \f b = \int_{\mathbb{S}^2} \nabla  f(\f x  , \f n  ,t ) \de \f n = \nabla  \int_{\mathbb{S}^2}   f( \f x , \f n  , t ) \de \f n = 0 \, .
 \end{align}
 for all $\f x \in \Omega$ and $t>0$. 
Finally, the rotational friction is assumed to vanish (compare to Eq.~\eqref{sigmav}), \emph{i.e.}, $\xi_r=0$. 
As in Ref.~\cite{Bae1}, one finally ends up with the system consisting of Eq.~\eqref{fokker} and the Navier--Stokes-like equation,
\begin{align}
\begin{split}
\pat{\f v}{t} + (\f v \cdot \nabla )\f v + \nabla p =& \nu\Delta\f v +\di\sigma_{\text{elast}} \, ,
\\
\di \f v =&0 \, ,
\end{split}\label{fokkernav}
\end{align}
where $\sigma_{\text{elast}}$ is given  in Eq.~\eqref{sigmaelastsim}.
 Otto and Tazavaras~\cite{otto} have been able
to show existence of strong solutions to the initial and boundary value problem for Eq.~\eqref{fokkernav} in the limit of large viscosity $\nu$, where the Navier--Stokes-like equation
reduces to the (stationary) Stokes equation with an additional stress tensor.
 Bae and Trivisa~\cite{Bae1} have
proved global existence of weak solutions to Eqs.~\eqref{fokker} and~\eqref{fokkernav}
and extended
this result also to compressible fluid flow~(see~Bae and Trivisa~\cite{Bae2}).
For more general stress tensors $\sigma $, certain regularity results are
derived in Ref.~\cite{fokker,fokkerregu}.


\section{$Q$-tensor theory}\label{sect-q}
\subsection{The $Q$-tensor and corresponding free energy functionals\label{sect-qstation}}

So far we have focused on the dynamics of the entire space- and orientation dependent probability density,
$f=f( \f x, \f n ,t)$. A common simplification,
pioneered by Nobel price laureate Pierre G.\ de Gennes (see de Gennes~\cite{gennes}),
consists in studying rather the dynamical evolution of the
{\em lowest}-order moment of $f$. For a liquid crystalline comprised of {\em nonpolar} particles,
the lowest (nonvanishing) moment is the so-called $Q$-tensor, a second-rank tensorial
quantity which provides a complete description of the orientational state of the system. The dynamics
is then referred to as $Q$-tensor dynamics.

At this point it seems worth to state the major {\em physical} argument 
why one should generally use a second-rank order parameter tensor to describe 
the orientational structure and dynamics rather than just the average orientation,
which would correspond to the nematic director. The latter is a vectorial
quantity and thus seems, at least on first sight, easier to handle (indeed, the dynamics
of the nematic director within the so-called Ericksen--Leslie theory  will be discussed 
in Sec.~\ref{sect-el}). However, reducing the system's dynamics to that of the director implicitly
assumes that the orientational order is {\em uniaxial}. This assumption can
break down in the presence of topological defects within the nematic 
phase (see Ref.~\cite{sonnet,hydro}), but also
as a result of strong shear flow (see Ref.~\cite{rienacker})
inducing complex states of orientational motion. 

To introduce $Q$, we recall (see Sec.~\ref{sect-doi}) that $f$ 
is a normalized, positive definite probability density on the sphere which is invariant against
reversal of the molecular orientation $\f n $. Thus, 
the first moment is zero, that is 
\begin{align}
\int_{\mathbb{S}^2} \f n  f (\f x , \f n  , t ) \de \f n =0\quad\text{for all }\f x \in \Omega, t \geq 0\,.
\end{align}
The second moment $Q$-tensor then is given as (see, \emph{e.g.}, de Gennes~\cite[Sec.~2.1]{gennes})
\begin{equation}\label{Q}
Q( \f x, t):= \int_{\mathbb{S}^2} \left (\f n  \otimes \f n  - \frac{1}{3}\, I\right )
f ( \f x, \f n ,t) \de \f n  \, .
\end{equation}
It is obvious that $Q$ is symmetric and thus possesses a complete system of orthonormal eigenvectors
$\f e_i$ with real eigenvalues $\lambda_i$  ($i= 1,2,3$). 
Therefore, $Q$ can be represented by the spectral decomposition (see Ref.~\cite{majumdar}),
\begin{align}\label{spectral}
Q = \lambda_1 \f e_1\otimes \f e _1 + \lambda_2 \f e _2\otimes \f e _2
+ \lambda_3 \f e _3\otimes \f e_3 \,.
\end{align}
Since the eigenvectors $\f e_i$ are unit vectors and since $f$ is a probability density, 
the eigenvalues are bounded (see Ref.~\cite{mark}) according to 
\begin{align}\label{eigen}
-\frac{1}{3} \leq \lambda_i =\f e_i\cdot Q \f e_i
= \int_{\S2}( \f e_i\cdot\f n  )^2 f (\f x,\f n ,t ) \de \f n  - \frac{1}{3}\leq \frac{2}{3}\, .
\end{align}
This condition is often called physical condition (see Ref.~\cite{mark}).
Further, because of $|\f n | = 1$, the tensor $Q$ is always traceless such that
\begin{equation}
\label{Qtrace}
\tr Q =  \lambda_1 + \lambda_2 + \lambda_3 = 0 \, .
\end{equation}

An analysis of the eigenvalues yields information about
the ordering state of the system. Specifically, $Q$ allows to distinguish between three different states:
isotropic, uniaxial or biaxial. In the isotropic phase, $f=1/(4\pi)$.
A short calculation then shows that
the tensor $Q$ is zero since $\int_{\mathbb{S}^2} \f n \otimes \f n \de \f n = (4\pi / 3) I $, and so are the eigenvalues $\lambda_i$.
Uniaxial nematic phases are characterized 
by a spectrum where two of the eigenvalues of $Q$ coincide, whereas the third one 
is different. Finally, biaxial states are characterized by three different eigenvalues
(see Mottram and Newton~\cite{intro}). 

We briefly mention an alternative way to decompose the $Q$-tensor (see
Ref.~\cite{majumdar}), which requires
only two eigenvectors  but two additional scalar order parameters.
This form is given by
\begin{align}
Q= s \left ( \f e_1 \otimes\f e_1 -\frac{1}{3}\,I\right )
+ r\left ( \f e_2\otimes \f e _2 - \frac{1}{3}\,I\right ).
\end{align}
The parameters $s$ and $r$ are related to the eigenvalues in Eq.~\eqref{spectral} via
\begin{align}
s = \lambda_1- \lambda_3= 2 \lambda_1+\lambda_2 , \quad r = \lambda_2-\lambda_3= 2\lambda_2+\lambda_1\,,
\end{align}
where we have used that $I = \f e_1 \otimes \f e_1 + \f e_2 \otimes \f e_2 + \f e_3 \otimes \f e_3 $.
Without loss of generality, $r=0$ in the uniaxial state (since $\lambda_2 = \lambda_3$ Ref.~\cite{majumdar2}),
but $r$ is nonzero in the biaxial state.

According to the standard thermodynamic principles, the equilibrium value of $Q$ corresponds 
to a minimum of a free energy functional. We thus need an expression for the free energy directly expressed
in terms of $Q$ (rather than in terms of the distribution $f$ as discussed in Sec.~\ref{sect-doi-hom}). 
As a starting point, we consider the famous Landau--de Gennes free energy (see Ref.~\cite{ball,HuangDing}), which
in the absence of external aligning fields and surfaces is given as
 \begin{subequations}
 \label{landau}
\begin{align}
F_{LG}[Q]=& 
\int_{\Omega} \Big (\frac{a(T)}{2}\tr\left (Q^2\right ) - \frac{b}{3}\, \tr(Q^3)
+ \frac{c}{4}\left (\tr \left (Q^2\right )\right )^2
\label{landau2} \\
&+\frac{ L_1 }{2}| \nabla Q|^2 + \frac{L_2}{2} \sum_{i=1}^3 \tr\left (\nabla Q_i \nabla Q_i\right )
+ \frac{L_3 }{2}|\di Q|^2\label{landau4}\\
& + \frac{L_4}{2} \sum_{i=1}^3 Q_i \cdot \curl Q_i+ \frac{L_5}{2} \sum_{i=1}^3 \tr\left (\nabla Q_i Q  (\nabla Q_i)^T\right )\Big )\de \f x.\label{landau3}
\end{align}
\end{subequations}
The first line on the right-hand side of Eq.~\eqref{landau2} describes the bulk free energy density,
assuming that the latter can be written as a polynomial  
in the $Q$ tensor. The coefficients
$b$ and $c$ are usually considered as state-independent material constants, whereas $a$ depends on the temperature
or composition (for a thermotropic or lyotropic system, respectively). Specifically, a change of the
sign of $a$ induces the isotropic-nematic transition. 
Note that all of the terms appearing in the bulk free energy are constructed to be rotationally invariant, as required
for a physically meaningful (that is, scalar) free energy $F$. This
requirement also enters into the remaining terms in Eq.~({\ref{landau}) which 
contain gradient terms and thus describe the ''elastic part'' of the free energy, that is,
the free energy due to elastic distortions.
Here, the notation $Q_i$ is used to describe the $i$-th row of
the $Q$-tensor (regarded as a column vector), and the parameters $L_1, \dots ,L_5$ are
material constants. A frequent assumption is the
one-constant approximation, where $L_2=L_3=L_4=L_5=0$~(see~Ref.~\cite{majumdar}). %

At this point it seems natural to ask about the relationship between the free energy functionals
for the $Q$ tensor, such as the Landau--de Gennes ansatz defined in Eq.~(\ref{landau}), and the ''microscopic" free energy functionals
for the full probability distribution considered in Sec.~\ref{sect-doi}. 
Indeed, both type of functionals are assumed to yield the {\em same} equilibrium state (described by $f$ or $Q$) by minimization!
A systematic strategy
to {\em derive} a $Q$-dependent free energy is as follows: One first expresses the full distribution $f$ as a power series in terms of
orientational order parameters
(involving $Q$ and higher-order moments).
Inserting and expanding the logarithmic (entropic) and interaction parts of the free energy up to the lowest vanishing terms
(which involve some closure approximation for the moments), one then obtains an expression for the bulk free energy
(for a recent application of this strategy, see Ref.~\cite{luga}%
).
The elastic contribution can then be derived by performing an additional gradient expansion.

Here we restrict ourselves to illustrating the relation between the $Q$- and the $f$-dependent functional
at the example of the 
Maier--Saupe interaction functional. The latter is given in Eq.~\eqref{MaierSaupe};
it represents the interaction part of the full $f$-dependent functional $F$ in Eq.~\eqref{free}.
Neglecting any spatial and temporal dependence of $f$, we find that
\begin{align}
\begin{split}
-&\int_{\S2}\int_{\S2} ( \f n \cdot \boldsymbol{\hat{n}})^2 f(\f n  ) f
(\boldsymbol{\hat{n}}) \de \f n  \de \boldsymbol{\hat{n}}=- \left (\int_{\S2}
\f n  \otimes \f n   f(\f n  )  \de \f n \right ):\left ( \int_{\S2} \boldsymbol{\hat{n}}\otimes
{\boldsymbol{\hat{n}}} f (\boldsymbol{\hat{n}})\de \boldsymbol{\hat{n}}\right )\\
&= - \left (\int_{\S2}  \left (\f n  \otimes \f n  -\frac{1}{3}I \right )f(\f n  )
\de \f n \right ):\left ( \int_{\S2} \left (\boldsymbol{\hat{n}}\otimes
{\boldsymbol{\hat{n}}}- \frac{1}{3}I\right ) f (\boldsymbol{\hat{n}})\de \boldsymbol{\hat{n}}\right )\\
&\hspace{1em}+\frac{1}{3} \int_{\S2} \hat{\f n }\cdot \hat{\f n }f ( \hat{\f n })
\de \hat{\f n }+\frac{1}{3} \int_{\S2}  {\f n }\cdot  {\f n }f (  {\f n }) \de  {\f n } -
 \frac{1}{9}= - \left |Q \right |^2 + \frac{5}{9}\,.
 \end{split}
\end{align}
The free energy then takes the form
\begin{align}\label{Maier}
F_{MS}[f]\sim  \int_{\S2} f(\f n ) (\ln f (\f n  )-1) \de \f n   -\alpha |Q|^2+ \frac{5}{9}\alpha\, ,
\end{align}
where we have not yet touched the entropic part (first term).  The additive constants $ \int_{\S2}f(\f n ) \de \f n  $ and
${(5\alpha)}/{9}  $ (with $\alpha$ being a constant related to the coupling strength, see also Eq.~\eqref{onsager})
can be neglected since the physical (equilibrium) behavior is determined by the derivative of $F$ rather
than by $F$ itself. The important point of Eq.~(\ref{Maier}) is that the (Maier--Saupe) interaction contribution is
simply expressed as the square of the norm of the order parameter. Historically, this was one of
the motivating ideas to construct a $Q$-dependent free energy.

At the end of this section on the stationary $Q$-tensor theory, we point the reader to
a quite omnipresent problem of the Landau--de Gennes free energy functional: 
Minimization of $F_{LG}$ (see Eq.~\eqref{landau}) can yield solutions for $Q$, whose
eigenvalues $\lambda_i$ {\em violate} the constraints given in Eq.~\eqref{eigen} (see Mottram and Newton~\cite{intro}).
To circumvent this problem,
Ball and Majumdar~\cite{ball} proposed the following strategy, which is
based on the free energy $F_{MS}$ given in Eq.~\eqref{Maier}.
The idea is to minimize $F_{MS}$ 
over all probability densities $f$ for a fixed
$Q$-tensor (which has the required properties). 
To this end, one first defines the set of probability densities related to a given $Q$-tensor via
\begin{align}
\mathcal{A}_Q:= \left \{  f:\S2 \ra \R, f\geq 0, \int_{\S2}f(\f n )\de \f n =1\,;
\quad Q= \int_{\S2}\left ( \f n  \otimes \f n  - \frac{1}{3}\,I\right )f(\f n ) \de \f n 
\right \}.\label{A}
\end{align}
For this set $\mathcal{A}_Q$, the authors of Ref.~\cite{ball} considered the minimization
problem associated to the energy~\eqref{Maier} and defined the function
\begin{align}
g(Q):=\left \{ \begin{array}{ll}
\min_{f \in \mathcal{A}_Q} \int_{\S2} f(\f n ) \ln f (\f n  ) \de \f n  \,
& \text{if }-\frac{1}{3}\leq\lambda_i(Q)\leq \frac{2}{3}\quad \text{for all }i=1,2,3\,,\\
\infty\,& \text{otherwise}.
\end{array}\right .
\label{mini}
\end{align}
The function $g$ is finite if the eigenvalues of the given
$Q$-tensor fulfill the constraints in Eq.~\eqref{eigen} (see Ref.~\cite{ball}). 
One then obtains a new bulk free energy density 
(replacing that in Eq.~\eqref{landau2}), which is defined as
\begin{align}\label{bulk}
\psi_B(Q)= T g(Q) - \alpha |Q|^2 \, ,
\end{align}
where $T$ denotes the absolute temperature and $\alpha$ the coupling strength (see~Eq.~\eqref{Maier}).
Ball and Majumdar~\cite{ball} have investigated in detail the analytical
properties of $\psi_B(Q)$, like smoothness,
convexity, isotropy, boundedness from below and logarithmic blow-ups at the boundary
of the domain~(see also Ref.~\cite{mark}). This 
paves the way to usage of this functional in stationary and nonstationary problems.
Moreover, the functional can be easily extended toward inhomogeneous situations. 
Specifically, in the one-constant approximation one obtains 
\begin{align}\label{BM}
F_{BM}[Q]:=\intet{\left (L_1|\nabla Q|^2 + \psi_B(Q)\right )}\, .
\end{align}

An alternative strategy to "cure" the problem of obtaining eigenvalues of $Q$ beyond a prescribed range has been suggested 
by Heidenreich {\em et\,al}.~\cite{ilg}.
They proposed an "amended" free energy functional which coincides with the Landau--de Gennes ansatz for small
degree of ordering, but includes a correction term becoming effective for stronger nematic order. The correction
can be motivated by an expansion of the Onsager excluded-volume potential.
The amended free energy potential is given by~(see Ref.~\cite{ilg})
\begin{align}
F[Q]= \int_\Omega \left (  \frac{a(T)}{2 }| Q|^2 - \frac{b}{3} \tr(Q^3) - \frac{c^4_{\text{max}}}{2} \ln \left (  1- \frac{| Q|^4}{c^4_{\text{max}}} \right )    \right ) \de \f x \,.\label{amended}
\end{align}
The potential induces logarithmic blow-ups when the sum of the squares of the eigenvalues  of $Q$ approaches a certain threshold $c_{\text{max}}$, \emph{i.e.}~ $ (\sum_{i=1}^3 \lambda_i^2) \ra  c_{\text{max}}^2$. 
Choosing the norm in the definition~\eqref{amended} differently, taking rather the spectral than the Frobenius norm in the logarithmic term, would guarantee blow ups for $Q$-tensors with eigenvalues leaving the physical range~\eqref{eigen}.
The associated functional is given by
\begin{align}
F[Q]= \int_\Omega \left (  \frac{a(T)}{2 }| Q|^2 - \frac{b}{3} \tr(Q^3) - \frac{1}{8} \ln \left (  1- 4 \left \| Q+ \frac{1}{6}I\right \|_{2}^4 \right )    \right ) \de \f x \,,
\end{align}
where the spectral norm  $\|A\|^2_{2}$ of a symmetric matrix $A\in \R^{3\times 3}$ is 
just the largest absolute value of the eigenvalues.
This formulation leads to minimizers fulfilling the physical condition~\eqref{eigen}. 
In this way we avoid the infinite-dimensional minimization problem in~\eqref{mini}, which involves a special  set of function (see Eq.~\eqref{A}). 

\subsection{nonstationary equations}

We now turn to
$Q$-tensor theory for {\em flowing}, and possibly inhomogeneous, liquid crystals. 
To this end, various formulations of the equations have been proposed and studied in the literature. 
The first complete formulation is due to Hess~\cite{hess1975,hess1976},
who derived the equations using concepts from irreversible thermodynamics (see de Groot
and Mazur~\cite{de1984}
 and showed
that they can also be motivated via a Fokker--Planck approach (for a modern derivation, see~Hess~\cite{hess2015}).
 A closely related derivation was proposed by Olmstedt and Goldbart~\cite{olmsted,olmsted2},
 and by
Beris and Edwards \cite{beris}. We further mention the equations
proposed by Stark and Lubensky~\cite{stark}
derived from the Poisson bracket formalism, and the phenomenologically motivated
$Q$-tensor equations by Pleiner, Liu, and Brand~\cite{pleiner}.

All of these formulations agree in their general structure 
(comprising an evolution equation for $Q$ combined 
with the Navier--Stokes equation for incompressible fluids),
but differ in aspects such as the free-energy functional used for the relaxation term, 
and the types of coupling between the order parameter tensor and
the flow profile $\f v$.
 
In the following we focus on the "hydrodynamical" equations suggested by Edwards and Beris~\cite{beris},
which are widely used in the modern literature
as well as in many numerical studies (see, \emph{e.g.}, the Lattice--Boltzmann studies of Yeomans {\em et\,al}. \cite{hydro}).
These equations are given by
\begin{subequations}
\label{evo}
\begin{align}
\pat{Q}{t} + ( \f v \cdot \nabla ) Q - S( Q, \nabla \f v ) =& \Gamma  H \, ,\label{evoQ}\\
\pat{\f v }{t} + ( \f v \cdot \nabla )\f v + \nabla p
- \nu \Delta \f v =& \di ( \tau + \sigma) \, , \label{evov}\\
\di \f v =& 0 \, ,
\end{align}
\end{subequations}
where $Q=Q(\f x,t)$. In Eq.~(\ref{evoQ}), the first two terms on the left-hand side
form the material derivative of the $Q$-tensor (including
the advection stemming from the flow velocity field $\f v$).
The second term $S$ describes the influence of $\f v$ on
the spatio-temporal distribution of the order parameter.
More specifically, the (tensorial) velocity gradient $\nabla \f v$
can lead to stretching of $Q$, which is imposed by the symmetric part   $(\nabla \f v)_{\mathrm{sym}}$
of the velocity gradient
and to a rotation of $Q$, imposed by
the skew-symmetric part $(\nabla \f v)_{\mathrm{skw}}$.
Explicitly, the function $S$ is given by
\begin{align}
\label{S}
S(Q, \nabla \f v) := &\left( (\nabla \f v)_{\mathrm{skw}} + \xi (\nabla \f v)_{\mathrm{sym}}\right)\left (Q+\frac{1}{3}\,I\right )
-\left (Q+\frac{1}{3}\,I\right )( (\nabla \f v)_{\mathrm{skw}} - \xi (\nabla \f v)_{\mathrm{sym}})\notag \\& - 2 \xi
\left (Q+\frac{1}{3}\,I\right ) \tr (Q\nabla \f v )\, ,
\end{align}
where the parameter $\xi$ depends on the microscopic properties of the 
material considered. Taken altogether, 
the left-hand side of Eq.~\eqref{evoQ} is  similar to the {\em Oldroyd}
 derivative of the stress tensor appearing in the description of Oldroyd fluids
 (see Ref.~\cite{LionsMasmoudi}). 
 The Oldroyd system describes visoelastic materials by coupling the Navier--Stokes equation with an additional evolution equation for the stress tensor. 
In this latter equation, the material derivative is consistent with the left-hand side of Eq.~\eqref{evoQ} if $Q$ is replaced by the stress and the second line of Eq.~\eqref{S} is omitted (compare Ref.~\cite{LionsMasmoudi}).

Finally, the right-hand side of Eq.~(\ref{evoQ}) describes
the relaxation of $Q$ towards its equilibrium value
determined by the minimum of the free energy functional.
It thus involves (besides the rotational diffusion constant $\Gamma$)
the variational derivative of $F$ with respect to the order parameter, that is,
\begin{align}\label{H}
 H =- \frac{\delta F}{\delta Q}[Q] + \frac{1}{3}\,I\tr \frac{\delta F}{\delta Q}[Q] \, .
\end{align}
The precise form of the function $H$ obviously depends on the choice of the free energy functional
$F$. Common choices in the mathematical literature are the Landau de Gennes functionals $F_{LG}$ (see Eq.~\eqref{landau}) and 
its modified (constrained) version $F_{BM}$ (see Eq.~\eqref{BM}). From the physical side, most studies concentrate
on using $F_{LG}$.

The second member of Eq.~(\ref{evo}) describes the interplay between the velocity profile
and the orientational ordering. As in the case of the corresponding kinetic equations based
on the distribution function $f$ (see, \emph{e.g.}, Eq.~(\ref{navdoi})),
the coupling is achieved by adding to the standard Navier--Stokes
additional stress tensor contributions (note that the isotropic contributions
appear already on the left-hand side of Eq.~(\ref{evo})).  The $Q$-dependent 
contribution to the stress tensor consists of a symmetric part
\begin{align}
\label{stress_tau}
\tau= - \xi \left ( Q + \frac{1}{3}\,I\right ) H - \xi H \left ( Q+ \frac{1}{3}\,I\right )
+ 2 \xi \left ( Q + \frac{1}{3}\,I\right )  \tr \left ( Q H\right ) - \nabla Q : \pat{F}{\nabla Q}
\end{align}
and an skew-symmetric part
\begin{align}
\sigma= QH - HQ \, ,
\label{stress}
\end{align}
where $H$ is defined as in Eq.~\eqref{H}.
In Eq.~\eqref{stress_tau}, the last term depends on which terms are 
included in the elastic part (involving gradient terms) of the free energy,
see text below Eq.~(\ref{landau}). In case of the
one-constant approximation, we find that
\begin{align}
\left (\nabla Q : \pat{F}{\nabla Q}\right )_{ij} := \left (\nabla Q :
\pat{(L_1|\nabla Q|^2 )}{\nabla Q}\right )_{ij} = L_1\sum_{k,l=1}^3
\partial _{x_i} Q_{kl}\partial _{x_j} Q_{kl}.
\end{align}
What remains to be calculated is the quantity $H$, \emph{i.e.}, the functional derivative of the free energy with respect to $Q$ itself (see Eq.~\eqref{H}). 
Taking $F=F_{LG}$ (see Eq.~(\ref{landau})) and using again the one-constant approximation, we obtain 
\begin{align}\label{Hlg}
H =- a(T) Q + b\left (Q^2 -\frac{1}{3}\,I \tr \left (Q^2\right )\right )
- c Q \tr \left (Q^2\right )+ L_1 \Delta Q \, ,
\end{align}
where the first three terms on the right-hand side in Eq.~\eqref{Hlg} 
stem from the bulk part of the free energy, whereas the 
last term is due to the elastic part. 

Numerical (Lattice--Boltzmann) simulations of the present (Beris--Edwards) $Q$-tensor model have been studied, among others, by
Yeomans \emph{et\,al}.~\cite{hydro,yeomans,yeomans2} who focused on the emergence of topological defects
and on domain growth. 
The $Q$-tensor equations proposed by Hess have been numerically explored in detail in
their homogeneous version (see, \emph{e.g.}~Refs.~\cite{goetz,strehober2},
 who concentrated on
the complex time-dependent states appearing at large shear rates), in a simplified inhomogeneous version
without back-coupling to the velocity profile (see Ref.~\cite{das})
and in the fully coupled inhomogeneous version (see Ref.~\cite{mandal}).

We close this section with some remarks from the mathematical side.
The system given by  Eqs.~\eqref{evo}-\eqref{Hlg}, has been studied by Paicu and Zarnescu~\cite{existence2,existence} who
discussed well-posedness.  In Ref.~\cite{existence2},  they were able
to prove the global existence of weak solutions for the Cauchy problem in dimension
three. For the special case that the rotational parameter $\xi$ appearing, \emph{e.g.}, in Eq.~(\ref{S}) vanishes, they also demonstrated a certain higher regularity of the solution, 
as well as weak-strong uniqueness in dimension
two. In Ref.~\cite{existence}, the results are extended to non-zero $\xi$.

The Cauchy problem has also been  addressed in Huang and Ding~\cite{HuangDing} who considered the more general free energy~\eqref{landau} with specific assumptions on the
constants $L_i$ ($ i= 1,\dots,5$).
Additionally, the authors provided a well-posedness result for small data. 
Abels, Dolzmann and Liu~\cite{Abels} studied the Dirichlet and the
Neumann problem for the coupled Navier--Stokes/$Q$-tensor system
involving the Landau--de Gennes free energy~\eqref{landau} in the one-constant approximation
(similar to Eq.~\eqref{BM}). They proved global existence of weak solutions as well as local existence of
strong solutions together with regularity results.

Further, Wilkinson~\cite{mark} provided existence results for the case that the free energy
functional is chosen to have the form $F_{BM} $ (see Eq.~\eqref{BM}), again with the one-constant
approximation. Then $H$ takes the form
\begin{align}\label{Hbm}
H:= L_1 \Delta Q - T \left ( \pat{g}{Q}(Q)
- \frac{1}{3} \tr \left ( \pat{g}{Q}(Q)\right ) I\right ) + \alpha Q \,.
\end{align}
The resulting system consists  of Eqs.~\eqref{evo}-\eqref{stress} and Eq.~\eqref{Hbm}. For this case,
Wilkinson showed in Ref.~\cite{mark} global existence of  weak
solutions in three dimensions for periodic boundary conditions and additional regularity for
solutions in two dimensions.
We may emphasize 
 that the theory in Ref.~\cite{mark}
predicts $Q$-tensors as solutions fulfilling the physical constraint~\eqref{eigen}(compare to Sec.~\ref{sect-qstation}).

Finally, Feireisl \emph{et\,al}.~\cite{zarne}~considered a nonisothermal variant of system~\eqref{evo} with an extra
equation for the temperature and the solution fulfilling an entropy inequality. Existence of weak solutions is proved as well.


\section{Ericksen--Leslie theory}\label{sect-el}

As a special case of the $Q$ tensor theory outlined in Sec.~\ref{sect-q}, we now
consider {\em uniaxial} nematic states in nonstationary (and possibly inhomogeneous) situations. For an uniaxial (cylindrically symmetric) system, the $Q$-tensor takes the form
\begin{align}\label{Qd}
Q= s \left ( \f d \otimes \f d -\frac{1}{3}\,I\right ),
\end{align}
where  $\f d$ is the (normalized) eigenvector related to the eigenvalue with the largest absolute value.
This eigenvector is unique up to multiplication with $-1$; it
has algebraic multiplicity one due to the zero-trace condition (\ref{Qtrace}) of the $Q$-tensor.
Physically speaking, Eq.~(\ref{Qd}) expresses the fact that under the assumption of uniaxiality
(which typically holds only for liquid crystals at low molecular weights, see Beris and Edwards~\cite{beris}), 
the system's anisotropy is fully specified by the director $\f d$. 

In the 1960s, Ericksen~\cite{Erick1,Erick2} and Leslie~\cite{leslie,leslie2,leslie3} have proposed 
a set of nonstationary equations describing the director dynamics under shear. This "Ericksen--Leslie" theory
is nowadays regarded as a pioneer of the more advanced $Q$-tensor approach.
Besides the assumption of uniaxiality, a further key simplification within the Ericksen--Leslie approach 
consists of the fact is that the degree of order along $\f d$ is assumed to be {\em constant}, specifically, $|\f d|=1$ (similar to the
theory of Oseen~\cite{oseen} and Frank~\cite{frank} for inhomogeneous, stationary nematic states).
The resulting set of equations reads, in its most general form (see~Ref.~\cite{leslie}),
\begin{subequations}
\label{ELeq}
\begin{align}
\rho_1 \frac{\de^2 \f d}{\de t^2}  - \di \left ( \pat{F}{\nabla \f d} \right )+
\pat{F}{\f d}  - \lambda_1 \left ( \pat{\f d} {t}+ ( \f v \cdot \nabla ) \f d +
(\nabla \f v)_{\mathrm{skw}} ^T\f d \right ) + \lambda_2 (\nabla \f v)_{\mathrm{sym}} \f d &= \rho_1 \f G,\label{EL1}\\
\pat{\f v}{t} + ( \f v \cdot \nabla ) \f v + \nabla p + \di \left ( \nabla \f d ^T
\pat{F}{\nabla \f d }  \right ) - \di \sigma & = \f F, \label{EL2}\\
\di \f v &=0\,,\\
|\f d|^2 &=1\,.
\end{align}
\end{subequations}
The first term in Eq.~\eqref{EL1} is the second material derivative, where the first one is defined via $\de / \de t = \partial / \partial t + ( \f v \cdot \nabla) $, and $F$ is the free energy functional $F=F(\f d, \nabla \f d)$.
The vectorial quantities $\f G$ and $\f F$ appearing on the right-hand sides 
ob Eqs.~(\ref{EL1}) and (\ref{EL2}) represent external forces
acting on the director and on the velocity field, respectively. 
Equation~(\ref{EL2}) further involves the so-called "Leslie stress" $\sigma$ that corresponds
to the dissipative part of the stress tensor. It is given by
\begin{align}
\begin{split}
\sigma:= &\mu_1 (\f d \cdot (\nabla \f v)_{\mathrm{sym}} \f d )\f d\otimes \f  d + \mu_4 (\nabla \f v)_{\mathrm{sym}} + \mu_2 \f d \otimes \left (
\pat{\f d} {t}+ ( \f v \cdot \nabla ) \f d + (\nabla \f v)_{\mathrm{skw}} \f d \right )  \\&+ \mu_3
 \left ( \pat{\f d} {t}+ ( \f v \cdot \nabla ) \f d
+ (\nabla \f v)_{\mathrm{skw}} \f d \right )\otimes  \f d   + \mu_5  \f d \otimes \f d (\nabla \f v)_{\mathrm{sym}}+
\mu_6 (\nabla \f v)_{\mathrm{sym}}\f d \otimes \f d \,,
\end{split}\label{stressE}
\end{align}
where the "Leslie constants" $\mu_1,\ldots,\mu_4$ are related to those appearing in Eq.~(\ref{EL1}) via
(see Ref.~\cite{leslie}),
\begin{align}\label{parodi}
\lambda_1 = \mu_2-\mu_3, \qquad \lambda_2 = \mu_5-\mu_6= -(\mu_2+\mu_3)\,.
\end{align}
The last equation is often referred to as Parodi's relation. It is a consequence of the Onsager reciprocal relations (see~Ref.~\cite{parodi})

Regarding the free energy $F$, different choices have been considered. Leslie~\cite{leslie} suggested
to use the expression due to Oseen and Frank~\cite{frank}, that is,
\begin{align}\label{oseen}
F_{OF}(\f d,\nabla \f d ) := 
 k_1 (\di \f d )^2 + k_2 ( \f d \cdot \curl \f d +q )^2 + k_3 | \f d
\times \curl \f d |^2 + \alpha \left ( \tr (\nabla \f  d )^2 - (\di \f d )^2\right ),
\end{align}
where $k_i$ ($i=1,2,3$), $\alpha$ and $q$ denote elastic constants.
Note that we used the notation $F=F(\f d , \nabla \f d)$ instead of $F[\f d]$ to express that we consider $F_{OF}$ as a function in $\f d $ and $\nabla \f d$  instead of a functional in $\f d$.  
 This is a
plausible choice for an inhomogeneous (uniaxial) nematic liquid crystal. However, as proposed by de Gennes~\cite{gennes}, the theory
can also be used to describe the impact of a magnetic field, disregarding distortion effects. To this end,
he proposed the {\em ansatz}
\begin{equation}
F_H ( \f d ) = - \chi_\bot |\f H|^2  - (\chi_\|-\chi_\bot) ( \f d \cdot \f H )^2 \, ,
\end{equation}
which conforms with the ansatz (\ref{VH}) for the effective potential in a corresponding
stationary theory. The parameters $\chi_\|$ and $\chi_\bot$ are the magnetic susceptibility
constants for a magnetic field parallel and perpendicular to the director
$\f d$, respectively.

\subsection{Mathematical studies of the Erickson--Leslie theory}

From the physical side, most of the recent studies 
of nonequilibrium liquid crystals employ full $Q$-tensor theories rather than director dynamics since it is well established that the systems of interest are typically {\em not} uniaxial. From the mathematical perspective, however,
the lower-dimensional structure of the Erickson--Leslie theory,
which still carries important aspects of the orientational dynamics, makes this theory somewhat more accessible for rigorous treatments.
We thus summarize recent mathematical advances in this area.

To start with, we stress that the full system~\eqref{ELeq}-\eqref{oseen} of nonlinear partial differential equations involving nonlinear coupling between $\f v $ and $\f d$ as well
as algebraic restrictions is, from the mathematical point of view, rather difficult.
So far results  were only achieved under rather strong
simplifications. In particular, the second material derivative is nearly always neglected (see Ref.~\cite{inertia} for a first treatment).
 This seems to be appropriate since the constant $\rho_1 $, which appears as a prefactor of the second time derivative and is related to the moment of inertia,
  is typically negligible in a macroscopic viscous system.
Moreover, if this second-order term would not be neglected, the first equation would remain
of hyperbolic type, what would require a different mathematical approach to existence.


Several important contributions regarding the first-order Ericksen--Leslie theory are due to Lin and Liu~\cite{linliu1,linliu2,linliu3}.
In Ref.~\cite{linliu1} they considered a system,
where the (Oseen--Frank) free energy functional is written in the one-constant approximation (see Eq.~\eqref{oseen}). 
Further, the algebraic restriction $|\f d |=1$ is
incorporated by adding a Ginzburg--Landau penalty functional, a trick which was later copied
by several other authors.
The resulting free energy functional reads
\begin{align}
F_\varepsilon ( \f d , \nabla \f d) = \frac{1}{2} |\nabla \f d |^2 + \frac{1}{4\varepsilon^2}
 ( | \f d |^2 -1) ^2 .
\end{align}
In the dissipative stress tensor
$\sigma$ (see Eq.~(\ref{stressE})), Lin and Liu set all Leslie constants to zero except $\mu_4$, such that $\sigma=\mu_4 (\nabla \f v)_{\mathrm{sym}}$.
Further, the external forces appearing on the right-hand sides of Eqs.~\eqref{EL1}
and~\eqref{EL2} are neglected, $\lambda_1$ is set to one, and $\lambda_2$ is set to zero.
With this last simplification, translational forces of the fluid onto the director
are neglected, which enables  to prove a maximum principle for $|\f d|^2$
that is  essential for the later analysis. The simplified system considered in Lin and
Liu~\cite{linliu1} then reads
\begin{align}
\begin{split}
 \pat{\f d} {t}+ ( \f v \cdot \nabla ) \f d  & =  \Delta \f d
 -\frac{1}{\varepsilon^2}( |\f d |^2 -1 ) \f d \,,\\
 \pat{\f v}{t} + ( \f v \cdot \nabla ) \f v + \nabla p - \frac{\mu_4}{2} \Delta \f v
 & = -  \di \left ( \nabla \f d ^T \nabla \f d \right ),\\
 \di \f v &= 0 \,.
 \end{split}\label{simpleEL}
\end{align}
Remark that $\mu_4 \di ( \nabla \f v)_{\mathrm{sym}} = (\mu_4/2) \Delta \f v$ since $\f v$ is solenoidal, (\emph{i.e.}, $\di \f v=0$). 
For this system~\eqref{simpleEL},
Lin and Liu proved global existence of  weak solutions and local existence of
strong solutions (see Ref.~\cite{linliu1}). In Ref.~\cite{linliu3}, they succeeded in generalizing their results 
to a system in formal analogy to Eq.\eqref{simpleEL}, but with the full dissipative Leslie stress tensor
 $\sigma$ (see Eq.~\eqref{stressE}) and not neglecting the nonlinear coupling term $(\nabla \f v)_{\mathrm{skw}}\f d$ in Eq.~\eqref{EL1}.
This yields
\begin{align}
\begin{split}
 \pat{\f d} {t}+ ( \f v \cdot \nabla ) \f d  + (\nabla \f v)_{\mathrm{skw}} \f d + \frac{\lambda_2}{\lambda_1}
 (\nabla \f v)_{\mathrm{sym}} \f d & =  \Delta \f d
 -\frac{1}{\varepsilon^2}( |\f d |^2 -1 ) \f d \,,\\
 \pat{\f v}{t} + ( \f v \cdot \nabla ) \f v + \nabla p - \sigma'
 & = -  \di \left ( \nabla \f d ^T \nabla \f d \right ),\\
 \di \f v &= 0 \,.
 \end{split}\label{notsosimpleEL}
\end{align}
In treating this system, Lin and Liu~\cite{linliu3} again specialized on the case $\lambda_2=0$, because
the existence results  rely essentially on a maximum principle for $| \f d |^2$, which does not hold if $\lambda_2\neq 0$. To illustrate the application
of the maximum principle, we multiply the first 
member of Eqs.~\eqref{notsosimpleEL} (with $\lambda_2=0$) with  $2 \f d$, yielding formally an equation for $|\f d|^2$,
\begin{align}
\begin{split}
\frac{\partial}{\partial t} |\f d|^2 + ( \f v \cdot \nabla )|\f d |^2  =& 2 \Delta
\f d\cdot \f d -  \frac{2}{\varepsilon^2}\left ( |\f d|^2 -1 \right )|\f d |^2  \\=&
 \Delta |\f d|^2 - 2  | \nabla \f d |^2  -  \frac{2}{\varepsilon^2}\left (
|\f d|^2 -1 \right )|\f d |^2 .
\end{split}
\end{align}
Note that $\f d \cdot ( \nabla \f v)_{\text{skw}} \f d =0$.
A maximum of the function $|\f d|^2$ with respect to time and space
implies a maximum of  $|\f d|$ itself. Moreover, for a maximum of $| \f d|^2$, we observe that the first derivatives are zero, such  that $\partial_t | \f d|^2 +( \f v \cdot \nabla ) | \f d|^2=0$ and 
 $\Delta|\f d|^2\leq 0$. The assumption that the
maximum is attained at a point where the norm of $\f d$ is greater than $1$ leads
(because of the contribution of the Ginzburg--Landau penalization with $( |\f d|^2 -1 ) | \f
d|^2 >0$ for $|\f d|>1$) to a contradiction.
We further mention that Lin and Liu proved in~\cite{linliu2} partial regularity of weak solutions to system~\eqref{simple} implying  the convergence of the
  weak solutions of system~\eqref{simpleEL} to a measure-valued solution of the original
  problem with $|\f d |=1$ as $\varepsilon\ra 0$.

After the work of Lin and Liu, many other studies have appeared  
focusing on slightly more
complicated models. For example, Cavaterra, Rocca and Wu~\cite{allgemein}
considered model~\eqref{notsosimpleEL} with
$\lambda_2\neq 0$. They demonstrated
global existence of weak solutions without a maximum principle and, additionally, local existence of classical
solutions for periodic boundary conditions. 
Wu, Xu and Liu~\cite{hiu} studied the importance of  Parodi's
 relation~(last equation in \eqref{parodi}) for the well-posedness and stability of the system.
 The long-time
behaviour for a similar model was investigated
in Petzeltov{\'a}, Rocca and Schimperna~\cite{longtime}, and a nonisothermal model was considered in Feireisl,
 Rocca and Schimperna~\cite{isothermal}.

The restriction of the Erickson--Leslie theory regarding the {\em constant} magnitude of the director
has been revisited in an article by Wang, Zhang and Zhang~\cite{recent}. They derived, under specific assumptions for the constants appearing 
in the Leslie stress,
a reformulation of the Ericksen--Leslie system without the Ginzburg--Landau penalisation. For this system,
they derived various local well-posedness results, as well as global well-posedness for small
initial data.
We also mention a very recent generalization of the existence theory 
for the Erickson--Leslie system~\eqref{ELeq} by Emmrich and Lasarzik~\cite{unsere}. They showed
for a general class of free energies global existence of weak solutions.

%
%
\section{Cross-Linkages between the different models}\label{sect-rel}

Having discussed three major approaches towards the stationary and nonstationary behavior of
liquid crystals, one natural question concerns the linkages (if present) between the different models.
We start by recalling the (obvious) connections between the dynamical variables of interest.
Within the Doi--Hess theory, this is the probability density $f=f(\f x,\f n,t)$ (see, \emph{e.g.}, Eq.~(\ref{normed})), whose second moment
corresponds (up to a constant shift) to the  tensorial order parameter $Q=Q(\f x,t)$, see Eq.~\eqref{Q}. The eigenvector related
to the eigenvalue of the $Q$-tensor with the largest absolute value then represents the director $\f d=\f d(\f x,t)$ in the
Erickson--Leslie theory. 

In the case $\f v=0$, all of the approaches discussed here reduce to
minimization of a free energy functional. For the Doi--Hess model, one possible choice for the functional $F=F[f]$ is given in Eq.~\eqref{free}.
In the $Q$-tensor and Erickson--Leslie theory, the minimization is rather carried out with respect to the 
respective order parameter ($Q$ or $\f d$), and the free energies of interest are given
either by the Landau--de Gennes expression $F_{LG}$ (see~\eqref{landau} or alternatively the Maier--Saupe  expression~\eqref{Maier}), 
or by the  Oseen--Frank energy $F_{OF}$ (see~\eqref{oseen}) in the Ericksen--Leslie model Eq.~\eqref{ELeq}.

Are the solutions of these minimization procedures consistent? In this regard it is interesting to note that, according to
Majumdar and Zarnescu~\cite{majumdar2}, 
minimization of the Landau--de Gennes free energy~\eqref{landau}
within the one-constant approximation (and for vanishing elastic constant, \emph{i.e.}, $L_1\ra 0$) can be written in the form
\begin{align}
Q^* = s \left ( \f d ^* \otimes \f d^* - \frac{1}{3}I\right )\,.
\end{align}
Here, $\f d^*$ is a unit vector minimizing the Oseen--Frank free energy~\eqref{oseen} within the one-constant approximation, \emph{i.e.}, $k_1=k_2=k_3=\alpha$.

Not surprisingly, the situation for nonstationary systems is more involved.
For the case of a 
constant
velocity field 
and a spatially homogeneous director $\f d$,
Kuzuu and Doi~\cite{kuzuu} showed that the Doi--Onsager equation and the Ericksen--Leslie equation yield
the same results. This is since the time derivative and the convection term in Eqs.~\eqref{evov} and~\eqref{EL2} vanish, and 
the Ericksen term $\di \left ( \nabla \f d ^T \pat{F}{\nabla \f d }  \right )$ in Eq.~\eqref{EL2} becomes zero.

Beris and Edwards~\cite{beris} have studied the linkages between the
full $Q$-tensor theory and the Ericksen--Leslie.
As shown in their monograph~\cite[Sec.11.6.1]{beris}, the equation of motion~\eqref{evo} for the
$Q$-tensor reduces to equation~\eqref{ELeq} for the director in the uniaxial case. 

Based on this earlier work, Wang, Zhang and Zhang~\cite{zhang1} have generalized the derivation of the Ericksen--Leslie equations from the
Doi--Hess theory towards an inhomogeneous systems. To do this they formulated the problem as a limiting problem for vanishing Deborah number
using a Hilbert expansion.
The connection between the Doi--Hess model~\eqref{instatdoi}-\eqref{navdoi} and the nonstationary equations for the $Q$-tensor~\eqref{evo} was investigated by Han \emph{et\,al}.~\cite{han}. 
They derived the system of equations~\eqref{evo} from Eqs.~\eqref{instatdoi}-\eqref{navdoi} using the so-called Bingham closure (see Ref.~\cite{bingham}). 
The latter provides a strategy to approximate higher order moments of the probability density~$f$ based on the second moment of $f$ (\emph{i.e.}, $Q$).

  A rigorous derivation of the Ericksen Leslie system~\eqref{ELeq} with the Oseen--Frank 
  energy~\eqref{oseen} starting from the $Q$-tensor theory for vanishing elastic 
  constants~($L_i\ra 0$ for $i=1,2,3$ and $L_4\equiv 0$ in the Landau--de Gennes energy~\eqref{landau}) 
  was shown by Wang, Zhang and Zhang~\cite{rigorous} again by using a Hilbert expansion method.
   %
\section{Relative energy estimates for nonconvex energies\label{sect-rela}}

In view of the highly complex (and often nonlinear) evolution equations discussed in this paper, one may ask how different solutions (if present) can be related to one another.
For probability distribution functions, a modern concept to measure the distance between two solutions
involves the "relative entropy",
which is nowadays well established not only in the mathematical community but also in related fields such as information theory~\cite{kullback} and quantum entanglement~\cite{quantum}.
The concept goes back to Dafermos~\cite{dafermos}. For a strictly convex entropy function $\eta: \R \ra \R$, the relative entropy of two solutions~$u$ and $\tilde{u}$ is given by (see Ref.~\cite[Sec.~5.3]{dafermos2})
\begin{align}
\mathcal{E}[u|\tilde{u}]:= \eta [u] - \eta[\tilde{u}] - \eta'[\tilde{u}]( u -\tilde{u}) \,.\label{relencon}
\end{align}
The strict convexity of $\eta$ guarantees that $\mathcal{E}$ is nonnegative. 
Inserting, for example, the entropy function $\eta [u] =  u  \ln u - u $ (compare Eq.~\eqref{free}) gives for probability densities $ u $ and $\tilde{u}$ on a domain $\Omega$
\begin{align}
\begin{split}
\mathcal{E}[u|\tilde{u}] = {}&\int_{\Omega} \left ( u \ln u - u  - \tilde{u}\ln \tilde{u}+ \tilde u  - ( \ln \tilde{u}  ) ( u-\tilde{u})   \right )\de \f x = \int_{\Omega } u\ln \left (\frac{u}{\tilde{u}}\right ) \de \f x - \int_{\Omega} u\de \f x + \int_{\Omega} \tilde{u} \de \f x 
\\
={}& \int_{\Omega } u\ln \left (\frac{u}{\tilde{u}}\right ) \de \f x\,
\end{split}
\end{align}
since $\int_{\Omega} u\de \f x = \int_{\Omega} \tilde{u} \de \f x=0$.
This is the so-called Kullblack--Leibler divergence (see Ref.~\cite{leiber}). 
Importantly, the concept can also be used in a more general way to measure the distance between two possible solutions 
(not necessarily distribution functions) of an evolution equation based on a special distance function, the relative entropy.
This approach has been used, e.g., to show weak-strong uniqueness property of solutions (see {Feireisl and Novotn\'y}~\cite{novotny}), the stability of an equilibrium state (see Feireisl~\cite{feireislstab}), the convergence to a singular limit problem (see {Breit, Feireisl and Hofmanova}~\cite{breit} as well as {Feireisl}~\cite{feireislsingular}), or to derive {\em a posteriori} 
estimates of numerical solutions (see {Fischer}~\cite{fischer}).
Another possible application is the definition of a generalized solution concept, the dissipative solutions. The formulation of such a concept relies on an inequality instead of an equality (see Lions~\cite[Sec.~4.4]{lionsfluid}).
Weak-strong uniqueness means that, in case of the same initial and boundary values, a generalized solution coincides with the strong solution as long as the latter exists.

In this section, we present a new relative entropy approach 
for a simplified Erickson--Leslie model combined with the Oseen--Frank energy. Since the Oseen--Frank free energy is nonconvex, our approach generalizes the relative entropy approach towards nonconvex functions. In our context, the main quantities of interest are (free) energies rather than entropies. We therefore refer henceforth to a "relative energy" approach.
We exemplify the calculations for the Ericksen--Leslie model but it is likely that the approach be applied to the $Q$-tensor model as well. 

Consider the Ericksen--Leslie model~\eqref{ELeq} with $\rho_1\equiv 0$, $\lambda_1=1$, $\f F= 0$ and with the free energy 
\begin{align}
F[ \f d  ] = \frac{1}{2}  \int_\Omega |\nabla \f d| ^2 \de \f x +  \frac{1}{2} \int_{\Omega} |\sk d \f d | ^2\de \f x  \,.\label{simpenergy}
\end{align}
We remark that $ | \nabla \f d |^2 = ( \di \f d)^2 +( \f d \cdot \curl \f d)^2  +  |\f d \times  \curl \f d |^2 + \di ( \nabla \f d ^T \f d -( \di \f d) \f d   ) $ as long as $| \f d|=1$ and  $ 4 | \sk d \f d |^2 = | \f d \times \curl \f d |^2$. Thus, the above energy is  a simplification of the Oseen--Frank energy (see Eq.~\eqref{oseen}) with $ k_1 = k_2 = \alpha = 1/2$, $q=0$ and $ k_3 = 1/2+1/8$. We could as well handle the full system and the full Oseen--Frank energy. Here, we simplify the system to keep the calculations readable and to focus on the novelty regarding the nonconvex part of the free energy. 
In addition, we assume that a  constant velocity is prescribed, $\f v \equiv \text{const}$. 
The equation of motion~\eqref{ELeq} then simplifies to
\begin{align}
\begin{split}
\partial_t \f d + ( \f v \cdot \nabla ) \f d ={}&- \frac{\delta F}{\delta \f d }[\f d] 
= \di \left ( \frac{\partial F_{OF} }{\partial \nabla \f d }(\f d , \nabla \f d)  \right ) - \frac{\partial F_{OF} }{\partial \f d }( \f d ,\nabla \f d)\\
={}& \di \left ( \nabla \f d  + \left ( ( \nabla \f d)_{\skw} \f d \otimes \f d\right ) _{\skw}\right ) - ( \nabla \f d)_{\skw}^T (\nabla \f d)_{\skw} \f d   \\
= {}& \Delta \f d +  \di ( \sk d \f d \otimes \f d )_{\skw} -  \sk d^T \sk d \f d  \,.
\end{split}
\label{simple}
\end{align}
By $F_{OF}$ we denote the potential~\eqref{oseen} of the Oseen--Frank energy for the simplified case of Energy~\eqref{simpenergy}.
Since $\f v $ is constant, an integration by parts shows formally that
\begin{align}
\begin{split}
\int_{\Omega} (\f v \cdot \nabla ) \f d \cdot \frac{\delta F}{\delta \f d }[\f d]\de \f x  ={}& \int_{\Omega} (\f v \cdot \nabla ) \f d \cdot \left (-\di \left ( \frac{\partial F_{OF} }{\partial \nabla \f d }(\f d , \nabla \f d)  \right ) + \frac{\partial F_{OF} }{\partial \f d }( \f d ,\nabla \f d)\right )\de \f x\\ ={}&
\int_{\Omega} \left ((\f v \cdot \nabla ) \nabla \f d : \frac{\partial F_{OF} }{\partial \nabla \f d }(\f d , \nabla \f d)   + (\f v \cdot \nabla ) \f d \cdot  \frac{\partial F_{OF} }{\partial \f d }( \f d ,\nabla \f d)\right ) \de \f x \\
={}&\int_\Omega ( \f v \cdot \nabla ) F_{OF}( \f  d , \nabla \f d) \de \f x 
 =0
\end{split} 
 \label{divfrei}
\end{align}
for every solution $\f d$ of Eq.~\eqref{simple}.  
The last equation in the Calculation~\eqref{divfrei} holds since $\f v$ is assumed to be solenoidal and fulfill homogeneous Dirichlet boundary conditions.
Testing Eq.~\eqref{simple} with the variational derivative of the solution~$\f d$, \emph{i.e.}, multiplying Eq.~\eqref{simple} with $\delta F / \delta \f d$ and integrating over $\Omega \times (0,t)$, yields with $ \int_\Omega \partial_t \f d \cdot ( \delta F / \delta \f d) \de \f x = \partial_t F[\f d]$  formally the energy equality
\begin{align}
F[ \f d(t)] + \int_0^t  \int_\Omega \left | \frac{\delta F }{\delta \f d }[\f d(s)]\right |^2\de \f x  \de s= F[ \f d(0)]\quad \text{for }t\in (0,T) \,.\label{energyeq}
\end{align}  
The generalization of the concept of relative energies to nonconvex functionals relies on a suitable definition of the relative energy. If a functional $\eta$ is nonconvex, the relative energy defined by~\eqref{relencon} is not necessarily positive anymore. Here we introduce a new way to define the relative energy for nonconvex functions, which is nonnegative and allows to show an associated relative energy inequality (see inequality~\eqref{relEn}). The proof of this inequality is carried out in Proposition~\ref{prop}. 

The relative energy is given by 
\begin{align}
\mathcal{E}[\f d(t)| \dd(t)] := \frac{1}{2}\int_\Omega | \nabla \f d (t) - \nabla \dd (t) |^2\de \f x  + \frac{1}{2} \int_\Omega | ( \nabla \f d (t))_{\skw} \f d(t) - (\nabla \dd(t))_{\skw} \dd(t) |^2\de \f x \,.\label{relenergy}
\end{align}
 The following proposition gives an estimate of the relative energy and thus a measure of the distance of two solutions.
 \begin{proposition}\label{prop}
 Let $\f d$ and $ \dd $ be two sufficiently smooth solutions to~Eq.~\eqref{simple} fulfilling the same Dirichlet boundary conditions with boundary values that are constant in time. Then there holds for $t\in (0,T)$
 \begin{align}
 \begin{split}
 \mathcal{E}[\f d(t)| \dd(t)] \leq{}& 2 \Big ( \mathcal{E}[\f d(0)| \dd(0)] + \max_{( \f x ,t) \in \overline{\Omega}\times [0,T]}|( \nabla \dd (\f x ,t)  )_{\mathrm{skw}} \dd(\f x,t)|^2 \int_{\Omega}| \f d(0)-\dd(0)|^2\de \f x  
\\ &+  \int_\Omega \left |  ( \f d(0)   - \dd (0) ) \cdot  ( ( \nabla \f d (0)  )_{\skw}-( \nabla \dd(0)   )_{\skw})^T( \nabla \dd(0)   )_{\skw} \dd(0) \right |   \de \f x 
 \Big ) 
 \exp\left ( \mathcal{K}(t)\right )\,,
 \end{split}\label{relEn}
\end{align}  
 where
 \begin{align}
 \begin{split}
  \mathcal{K}(t)={}& 
4 c t  \max_{( \f x ,t) \in \overline{\Omega}\times [0,T]}|( \nabla \dd (\f x ,t)  )_{\skw} \dd(\f x,t)|^4  \\ &+ 
2(1+c)\int_0^t\max_{\f x \in \overline{\Omega}} \left (
| \dd(\f x )| | \nabla \partial_t \dd(\f x )| + | \nabla \dd(\f x )| | \partial_t \dd(\f x )|+ | \nabla \partial_t \dd(\f x)|    +  | \partial_t \dd(\f x)|+ | \f v (\f x) |^2  \right )\de s 
%
\,.
\end{split}
\label{K}
 \end{align}
Both solutions coincide if $\f d (0)= \dd(0)$. Here $c$ denotes the constant of the Poincar\'e inequality (see Eq.~\eqref{poincare}).
 \end{proposition}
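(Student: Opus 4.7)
The plan is to prove~\eqref{relEn} by a generalised Gronwall argument applied to an augmented quantity that combines $\mathcal{E}[\f d|\dd]$ with $\|\f d - \dd\|_{L^2}^2$. Setting $\f D := \f d - \dd$ and subtracting the two copies of~\eqref{simple} yields the difference equation
\begin{align*}
\partial_t \f D + (\f v \cdot \nabla)\f D - \Delta \f D = \di\bigl(\sk d \f d \otimes \f d - (\nabla\dd)_{\mathrm{skw}}\dd \otimes \dd\bigr)_{\mathrm{skw}} - \bigl(\sk d^T \sk d \f d - (\nabla\dd)_{\mathrm{skw}}^T (\nabla\dd)_{\mathrm{skw}}\dd\bigr)\,,
\end{align*}
with homogeneous Dirichlet data, so that in particular $\partial_t \f D = 0$ on $\partial\Omega$. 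The whole proof reduces to evaluating $\tfrac{d}{dt}\mathcal{E}(t)$ using this PDE and the $L^\infty$ regularity assumed on $\dd$.

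Differentiating the gradient summand of $\mathcal{E}$ and integrating by parts gives $\tfrac{d}{dt}\tfrac12\|\nabla \f D\|_{L^2}^2 = -\int_\Omega \Delta \f D \cdot \partial_t \f D \,\de\f x$; substituting the PDE then produces the favourable dissipation $-\|\Delta \f D\|_{L^2}^2$, a convection term that vanishes because $\f v$ is constant and solenoidal, and two nonlinear remainders. The second summand $\tfrac12\|A\|_{L^2}^2$, with $A := \sk d \f d - (\nabla\dd)_{\mathrm{skw}}\dd$, is more delicate: Leibniz's rule yields four terms in $\partial_t A$, two involving $\partial_t \f D$ and $\nabla\partial_t \f D$ (to be rewritten via the PDE for $\f D$), and two involving $\partial_t \dd$ and $\nabla\partial_t \dd$ (bounded in $L^\infty$ by the smoothness of $\dd$ and contributing to the integrand of $\mathcal{K}$). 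The key algebraic identity
\begin{align*}
A = (\nabla\dd)_{\mathrm{skw}}\f D + (\nabla \f D)_{\mathrm{skw}}\dd + (\nabla \f D)_{\mathrm{skw}}\f D\,,
\end{align*}
together with the analogous one for $\sk d^T \sk d \f d - (\nabla\dd)_{\mathrm{skw}}^T (\nabla\dd)_{\mathrm{skw}}\dd$, splits every nonlinear difference into a piece controlled by $A$ (hence by $\mathcal{E}$), a piece linear in $\f D$ with $\dd$-dependent coefficients (bounded through the Poincar\'e inequality and $L^\infty$ estimates on $\dd$), and the cubic-in-$\f D$ correction $(\nabla \f D)_{\mathrm{skw}}\f D$.

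This cubic piece is the only contribution that cannot be absorbed into $\mathcal{E}$ alone. Setting $M := \max_{\overline{\Omega}\times[0,T]}|(\nabla\dd)_{\mathrm{skw}}\dd|$, Young's and Poincar\'e's inequalities yield $\bigl|\int_\Omega (\nabla\dd)_{\mathrm{skw}}\dd \cdot (\nabla \f D)_{\mathrm{skw}}\f D \,\de\f x\bigr| \leq \tfrac12 M^2 \|\f D\|_{L^2}^2 + \tfrac12 \|\nabla \f D\|_{L^2}^2$, so I would carry $\|\f D\|_{L^2}^2$ as an auxiliary quantity by testing the PDE for $\f D$ against $\f D$ and controlling its nonlinear terms with the same decomposition. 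Combining the two bounds, the augmented quantity
\begin{align*}
\mathcal{R}(t) := \mathcal{E}[\f d(t)|\dd(t)] + M^2 \|\f D(t)\|_{L^2}^2 + \Big|\int_\Omega (\nabla\dd(t))_{\mathrm{skw}}\dd(t) \cdot (\nabla \f D(t))_{\mathrm{skw}}\f D(t)\,\de\f x\Big|
\end{align*}
satisfies a Gronwall-type differential inequality $\tfrac{d}{dt}\mathcal{R}(t) \leq \mathcal{K}'(t)\, \mathcal{R}(t)$, where $\mathcal{K}'$ denotes the $s$-integrand defining $\mathcal{K}$ in~\eqref{K}. Gronwall's lemma then delivers $\mathcal{R}(t) \leq \mathcal{R}(0)\exp(\mathcal{K}(t))$; since $\mathcal{E}(t) \leq \mathcal{R}(t)$ while a final Young rearrangement bounds $\mathcal{R}(0)$ by twice the sum of the three initial-data terms appearing on the right-hand side of~\eqref{relEn}, the stated estimate follows, and the uniqueness assertion is an immediate corollary.

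The main obstacle is the nonconvexity of~\eqref{simpenergy}: because $\tfrac12 \|\sk d \f d\|^2$ is non-quadratic in $\f d$, the naive Dafermos-type relative entropy $F[\f d] - F[\dd] - \langle \delta F/\delta\dd, \f D\rangle$ differs from $\mathcal{E}$ by exactly the indefinite cubic quantity $\int_\Omega (\nabla\dd)_{\mathrm{skw}}\dd \cdot (\nabla \f D)_{\mathrm{skw}}\f D \, \de\f x$. The definition~\eqref{relenergy} has been arranged so that $\mathcal{E}$ is nonnegative, but this shift is precisely what forces the factor~$2$, the $M^2\|\f D(0)\|_{L^2}^2$ term, and the cross term in the initial data of~\eqref{relEn}. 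All other error contributions produced by the differentiation are routinely absorbed by the dissipation $\|\nabla \f D\|_{L^2}^2$ (via Poincar\'e) or controlled by the $L^\infty$ norms of $\dd$, $\nabla\dd$, $\partial_t\dd$, $\nabla\partial_t\dd$ and $\f v$ featured in~\eqref{K}.
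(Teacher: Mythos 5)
Your proposal correctly identifies the two essential difficulties --- the indefinite cubic term produced by the nonconvex summand $\tfrac12\int_\Omega|\sk d\f d|^2\de\f x$, and the need to carry $\int_\Omega|\f d-\dd|^2\de\f x$ alongside the relative energy --- and it predicts the right shape of the final estimate (the factor $2$, the term $M^2\int_\Omega|\f d(0)-\dd(0)|^2\de\f x$, and the cubic initial-data term). The overall Gronwall strategy is also the one used in the paper. However, there is a concrete gap in your dissipation bookkeeping. The paper does not differentiate $\mathcal{E}$ in time and test the difference equation with $-\Delta(\f d-\dd)$; it expands $\mathcal{E}(t)$ and the relative dissipation by the binomial formula and invokes the energy equality~\eqref{energyeq} for each solution, so that the \emph{full} relative dissipation $\int_0^t\int_\Omega|\delta F/\delta\f d[\f d]-\delta F/\delta\f d[\dd]|^2\de\f x\de s$ sits on the left-hand side of~\eqref{sum}. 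This quantity is then consumed exactly: one half in~\eqref{eqdrei} to absorb the convection cross term $\int_0^t\int_\Omega\f v\cdot(\nabla\f d-\nabla\dd)^T(\delta F/\delta\f d[\f d]-\delta F/\delta\f d[\dd])\de\f x\de s$, and one half in~\eqref{absch6} to close the estimate for $\int_\Omega|\f d(t)-\dd(t)|^2\de\f x$. Your proposed dissipation $-\|\Delta(\f d-\dd)\|_{L^2}^2$ (and, in your closing paragraph, merely $\|\nabla(\f d-\dd)\|_{L^2}^2$ ``via Poincar\'e'', which is part of $\mathcal{E}$ itself and not a dissipation at all) is strictly weaker and does not dominate $|\delta F/\delta\f d[\f d]-\delta F/\delta\f d[\dd]|$, which also contains the differences of $\di(\sk d\f d\otimes\f d)_{\skw}$ and of $\sk d^T\sk d\f d$. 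Consequently the convection term above, as well as the terms containing the gradient of $A=\sk d\f d-(\nabla\dd)_{\skw}\dd$ that appear once you integrate by parts in $\int_\Omega A\cdot(\nabla\partial_t(\f d-\dd))_{\skw}\f d\,\de\f x$, cannot be absorbed: neither $\nabla A$ nor the full variational-derivative difference is controlled by $\mathcal{E}$, by $\|\f d-\dd\|_{L^2}^2$, or by $\|\Delta(\f d-\dd)\|_{L^2}^2$.

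A second, smaller issue: placing the absolute value of the sign-changing integral $\int_\Omega(\nabla\dd)_{\skw}\dd\cdot(\nabla(\f d-\dd))_{\skw}(\f d-\dd)\,\de\f x$ inside your Gronwall functional $\mathcal{R}$ requires a differential inequality for that absolute value, which forces you to differentiate the cubic term in time and thereby reintroduces exactly the uncontrolled quantities above. The paper avoids this: the cubic term is evaluated only at the endpoints $t$ and $0$ via the fundamental theorem of calculus (Eq.~\eqref{absch3}), the time-$t$ value is estimated once by Young's inequality (Eq.~\eqref{pktwab}) as $\tfrac14\int_\Omega|\nabla\f d(t)-\nabla\dd(t)|^2\de\f x+M^2\int_\Omega|\f d(t)-\dd(t)|^2\de\f x$, and the first summand is absorbed into the left-hand side $\mathcal{E}(t)$ --- this absorption is precisely the origin of the prefactor $2$ in~\eqref{relEn}. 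I would recommend recasting your computation in the paper's integral form, so that the full relative dissipation is available on the left before any Young inequality is applied.
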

\begin{proof}
The relative energy~\eqref{relenergy} can be explicitly calculated 
using the binomial formula 
\begin{align}
\mathcal{E}[\f d(t)| \dd(t)] = F[\f d (t)] + F[ \dd(t)] - \int_\Omega  \nabla \f d (t): \nabla \dd(t) \de \f x  - \int_\Omega \f d (t)\cdot ( \nabla \f d(t))_{\skw}^T  ( \nabla \dd(t))_{\skw} \dd (t) \de \f x \,.\label{calcrel}
\end{align}
Similarly, we calculate the difference of the variational derivatives using the binomial formula
\begin{multline}
\int_0^t \int_\Omega \left | \frac{\delta F }{\delta \f d }[\f d] - \frac{\delta F }{\delta \f d }[\dd] \right |^2 \de \f x \de s   =\\ \int_0^t \int_\Omega \left | \frac{\delta F }{\delta \f d }[\f d]  \right |^2 \de \f x\de s + \int_0^t \int_\Omega \left |  \frac{\delta F }{\delta \f d }[\dd] \right |^2 \de \f x \de s - 2 \int_0^t \int_\Omega \frac{\delta F }{\delta \f d }[\f d] \cdot  \frac{\delta F }{\delta \f d }[\dd]  \de \f x \de s \label{reldiss}
\end{multline} 
Adding Eq.~\eqref{calcrel} and Eq.~\eqref{reldiss} and using the energy equality~\eqref{energyeq} for the two solutions gives
\begin{align}
\begin{split}
\mathcal{E}[\f d(t)| \dd(t)]+\int_0^t \int_\Omega \left | \frac{\delta F }{\delta \f d }[\f d] - \frac{\delta F }{\delta \f d }[\dd] \right |^2 \de \f x \de s ={}& F[\f d(0)] + F [\dd(0)] - \int_\Omega  \nabla \f d (t): \nabla \dd(t) \de \f x \\& - \int_\Omega \f d (t)\cdot ( \nabla \f d(t))_{\skw}^T ( \nabla \dd(t))_{\skw} \dd (t) \de \f x
\\&
- 2 \int_0^t \int_\Omega \frac{\delta F }{\delta \f d }[\f d] \cdot  \frac{\delta F }{\delta \f d }[\dd]  \de \f x \de s 
\end{split}\label{sum}
\end{align}

In the following we prove two integration by parts formulas (Eq.~\eqref{intpart1} and Eq.~\eqref{intpart5}), which are essential to estimate the relative energy~\eqref{relenergy}.
In regard of the third  term on the right-hand side of Eq.~\eqref{sum}, we observe with the fundamental theorem of calculus and an integration by parts that
\begin{align}
\begin{split}
 \int_{\Omega} \nabla \f d (t):  \nabla \dd(t)\de \f x   - \int_{\Omega} \nabla \f d (0):  \nabla \dd(0) \de \f x  = {}& \int_0^t \partial_s \int_\Omega  \nabla \f d(s) : \nabla \dd(s) \de \f x \de s \\
={}&  \int_0^t  \int_\Omega  \left ( \nabla \partial _s \f d(s) : \nabla \dd(s) + \nabla \f d (s) : \nabla \partial _s \dd(s) \right )  \de \f x \de s \\
={}&\int_0^t \int_{\Omega}( \partial_s \f d (s) , - \Delta \dd(s) ) +  ( \partial_s \dd(s) , - \Delta \f d (s))\de \f x  \de s \,.
\end{split}
\label{intpart1}
\end{align}
The boundary terms disappear since $ \partial_s \f d $ and $\partial_s \dd$ vanish at the boundary. This is due to the fact that the prescribed boundary values are constant in time.  
Similarly but somehow more involved, we obtain in regard of the fourth term on the right hand side of Eq.~\eqref{sum} that
\begin{align}
\begin{split}
&\int_\Omega  \f d (t)  \cdot ( \nabla \f d(t))_{\skw}^T ( \nabla \dd(t))_{\skw} \dd (t) \de \f x  -  \int_\Omega   \f d (0) \cdot ( \nabla \dd(0))_{\skw} ^T ( \nabla \dd(0))_{\skw} \dd (0) \de \f x \\
&{} = \int_0^t \partial _s \int_{\Omega}   \f d (s)  \cdot( \nabla \f d(s))_{\skw}^T ( \nabla \dd(s))_{\skw} \dd (s) \de \f x \de s 
\\
&{} = \int_0^t  \int_{\Omega}  \left ( \f d (s)  \cdot ( \nabla \partial _s\f d(s))_{\skw}^T( \nabla \dd(s))_{\skw} \dd (s)+  \partial _s\f d (s)  \cdot( \nabla \f d(s))_{\skw}^T ( \nabla \dd(s))_{\skw} \dd (s) \right ) \de \f x \de s 
\\
&{} \quad+ \int_0^t  \int_{\Omega}  \left ( \f d (s)  \cdot( \nabla \f d(s))_{\skw}^T ( \nabla \partial _s\dd(s))_{\skw} \dd (s)+  \f d (s)  \cdot ( \nabla \f d(s))_{\skw}^T ( \nabla \dd(s))_{\skw} \partial _s\dd (s) \right ) \de \f x \de s 
\,.
\end{split} \label{intpart2}
\end{align}
The right-hand side of Eq.~\eqref{intpart2} is rearranged further on. In the following, we omit the dependence on the integration parameter $s$ for brevity. 
Adding and simultaneously subtracting the terms 
\begin{align}
\begin{split}
&{}  \int_0^t  \int_{\Omega}  \left ( \dd    \cdot( \nabla \partial _s\f d )_{\skw}^T ( \nabla \dd )_{\skw} \dd  +  \partial _s\f d    \cdot ( \nabla \dd )_{\skw}^T ( \nabla \dd )_{\skw} \dd   \right ) \de \f x \de s 
\\
&{} \quad+ \int_0^t  \int_{\Omega}  \left ( \f d    \cdot( \nabla \f d )_{\skw}^T ( \nabla \partial _s\dd )_{\skw} \f d  +  \f d    \cdot ( \nabla \f d )_{\skw}^T ( \nabla \f d )_{\skw} \partial _s\dd   \right ) \de \f x \de s 
\end{split}
\end{align}
leads to
\begin{align}
\begin{split}
&\int_\Omega   \f d (t)  \cdot ( \nabla \f d(t))_{\skw}^T ( \nabla \dd(t))_{\skw} \dd (t) \de \f x  -  \int_\Omega  \f d (0) \cdot( \nabla \f d(0))_{\skw} ^T  ( \nabla \dd(0))_{\skw} \dd (0) \de \f x \\
&{} = \int_0^t  \int_{\Omega}  \left ( \dd    \cdot( \nabla \partial _s\f d )_{\skw}^T ( \nabla \dd )_{\skw} \dd  +  \partial _s\f d    \cdot( \nabla \dd )_{\skw}^T ( \nabla \dd )_{\skw} \dd   \right ) \de \f x \de s 
\\
&{} \quad+ \int_0^t  \int_{\Omega}  \left (\f d    \cdot( \nabla \f d )_{\skw} ^T ( \nabla \partial _s\dd )_{\skw} \f d  +  \f d    \cdot( \nabla \f d )_{\skw}^T ( \nabla \f d )_{\skw} \partial _s\dd   \right ) \de \f x \de s 
\\
&{} \quad+ \int_0^t  \int_{\Omega}  \left ( (\f d  - \dd )  \cdot ( \nabla \partial _s\f d )_{\skw}^T ( \nabla \dd )_{\skw} \dd  +  \partial _s\f d    \cdot(( \nabla \f d )_{\skw}- ( \nabla \dd )_{\skw})^T ( \nabla \dd )_{\skw} \dd   \right ) \de \f x \de s 
\\
&{} \quad+ \int_0^t  \int_{\Omega}  \left ( \f d    \cdot ( \nabla \f d )_{\skw}^T ( \nabla \partial _s\dd )_{\skw} (\dd  - \f d  ) + \f d    \cdot ( \nabla \f d )_{\skw} ^T ( ( \nabla \dd )_{\skw} - ( \nabla \f d )_{\skw}) \partial _s\dd   \right ) \de \f x \de s 
\,.
\end{split} \label{intpart3}
\end{align}
For the first two terms of the right-hand side of Eq.~\eqref{intpart3}, an integration by parts shows that
\begin{align}
\begin{split}
&\int_0^t  \int_{\Omega}  \left ( \dd    \cdot( \nabla \partial _s\f d )_{\skw}^T  ( \nabla \dd )_{\skw} \dd  +  \partial _s\f d    \cdot ( \nabla \dd )_{\skw}^T ( \nabla \dd )_{\skw} \dd   \right ) \de \f x \de s 
\\
&{} + \int_0^t  \int_{\Omega}  \left (\f d    \cdot ( \nabla \f d )_{\skw} ^T  ( \nabla \partial _s\dd )_{\skw} \f d  +  \f d    \cdot ( \nabla \f d )_{\skw}^T ( \nabla \f d )_{\skw} \partial _s\dd   \right ) \de \f x \de s 
\\
{}&{}= \int_0^t \int_{\Omega}  \partial_s \f d  \cdot  \left (-  \di ( ( \nabla \dd)_{\skw} \dd \otimes \dd )_{\skw} +  ( \nabla \dd)_{\skw} ^T ( \nabla \dd)_{\skw} \dd\right )   \de \f x  \de s\\&\quad+ \int_0^t \int_\Omega  \partial_s \dd  \cdot \left (  -\di ( \sk d \f d \otimes \f d )_{\skw} +  \sk d^T \sk d \f d  \right )\de \f x  \de s\,.
\end{split}\label{intpart4}
\end{align}
Note that the boundary terms disappear since $ \partial_s \f d $ and $\partial_s \dd$ vanish at the boundary (compare to Eq.~\eqref{intpart1}).
The last two terms  on the right-hand side of Eq.~\eqref{intpart3}
are rearranged by adding and subtracting the term
\begin{align}
 \int_0^t  \int_{\Omega}  \left ((\f d  - \dd )  \cdot ( \nabla \partial _s\dd )_{\skw} ^T( \nabla \dd )_{\skw} \dd  +  \partial _s\dd    \cdot (( \nabla \f d )_{\skw}- ( \nabla \dd )_{\skw})^T ( \nabla \dd )_{\skw} \dd   \right ) \de \f x \de s \,.
\end{align}
This leads to 
\begin{align}
\begin{split}
&{} \int_0^t  \int_{\Omega}  \left ( (\f d  - \dd )  \cdot ( \nabla \partial _s\f d )_{\skw}^T ( \nabla \dd )_{\skw} \dd  +  \partial _s\f d    \cdot (( \nabla \f d )_{\skw}- ( \nabla \dd )_{\skw})^T  ( \nabla \dd )_{\skw} \dd   \right ) \de \f x \de s 
\\
&{} + \int_0^t  \int_{\Omega}  \left ( \f d    \cdot ( \nabla \f d )_{\skw}^T ( \nabla \partial _s\dd )_{\skw} (\dd  - \f d  ) +  \f d    \cdot ( \nabla \f d )_{\skw}^T ( ( \nabla \dd )_{\skw} - ( \nabla \f d )_{\skw}) \partial _s\dd   \right ) \de \f x \de s 
\\
&{}= 
 \int_0^t  \int_{\Omega}  \left (   \partial _s(( \nabla \f d )_{\skw}- ( \nabla \dd )_{\skw}) (\f d  - \dd )   + (( \nabla \f d )_{\skw}- ( \nabla \dd )_{\skw}) \partial _s(\f d- \dd)      \right )  \cdot ( \nabla \dd )_{\skw} \dd \de \f x \de s 
\\
&{} \quad+ \int_0^t  \int_{\Omega} ( ( \nabla \f d )_{\skw} \f d -  ( \nabla \dd )_{\skw} \dd)   \cdot  \left (  ( \nabla \partial _s\dd )_{\skw} (\dd  - \f d  ) +  ( ( \nabla \dd )_{\skw} - ( \nabla \f d )_{\skw}) \partial _s\dd   \right ) \de \f x \de s \,.
\end{split}\label{absch1}
\end{align}
Using the product rule, the first term on the right-hand side of Eq.~\eqref{absch1} can be expressed via
\begin{align}
\begin{split}
 &\int_0^t  \int_{\Omega}  \left (   \partial _s(( \nabla \f d )_{\skw}- ( \nabla \dd )_{\skw}) (\f d  - \dd )   + (( \nabla \f d )_{\skw}- ( \nabla \dd )_{\skw}) \partial _s(\f d- \dd)      \right )  \cdot ( \nabla \dd )_{\skw} \dd \de \f x \de s \\
&{}=  \int_0^t  \int_{\Omega}     \partial _s\left (( \nabla \f d )_{\skw}- ( \nabla \dd )_{\skw}) (\f d  - \dd )\right )    \cdot ( \nabla \dd )_{\skw} \dd \de \f x \de s 
 \\
 &{}=  \int_0^t \partial_s \int_\Omega \left ( ( ( \nabla \f d   )_{\skw}-( \nabla \dd   )_{\skw}) ( \f d   - \dd  ) \right ) \cdot  ( \nabla \dd   )_{\skw} \dd    \de \f x  \de s \\
& \quad- \int_0^t \int_{\Omega} \left (( ( \nabla \f d   )_{\skw}-( \nabla \dd   )_{\skw}) ( \f d   - \dd  )\right ) \cdot   \partial_s \left (( \nabla \dd   )_{\skw} \dd  \right )   \de \f x  \de s\,.
\end{split}\label{absch2}
\end{align}
The fundamental theorem of calculus gives for the first term on the right-hand side of Eq.~\eqref{absch2} that
\begin{align}
\begin{split}
&\int_0^t \partial_s \int_\Omega  \left (( ( \nabla \f d   )_{\skw}-( \nabla \dd   )_{\skw}) ( \f d   - \dd  )\right ) \cdot  ( \nabla \dd   )_{\skw} \dd    \de \f x  \de s \\
&{}=   \int_{\Omega}  ( \f d(t)   - \dd (t) ) \cdot  ( ( \nabla \f d (t)  )_{\skw}-( \nabla \dd(t)   )_{\skw})^T  ( \nabla \dd(t)   )_{\skw} \dd(t)   \de \f x  
\\
{}&\quad -  \int_{\Omega} ( \f d(0)   - \dd (0) ) \cdot ( ( \nabla \f d (0)  )_{\skw}-( \nabla \dd(0)   )_{\skw}) ^T  ( \nabla \dd(0)   )_{\skw} \dd(0)   \de \f x \,.
\end{split}\label{absch3}
\end{align}
Putting Eqs.~\eqref{intpart4},~\eqref{absch1},~\eqref{absch2} and~\eqref{absch3} back into~\eqref{intpart3} finally gives
\begin{align}
\begin{split}
\int_\Omega & \f d (t)  \cdot ( \nabla \f d(t))_{\skw} ^T ( \nabla \dd(t))_{\skw} \dd (t) \de \f x  -  \int_\Omega   \f d (0) \cdot ( \nabla \f d(0))_{\skw}^T  ( \nabla \dd(0))_{\skw} \dd (0) \de \f x \\
=& \int_0^t \int_{\Omega}  \partial_s \f d  \cdot  \left (-  \di ( ( \nabla \dd)_{\skw} \dd \otimes \dd )_{\skw} +  ( \nabla \dd)_{\skw} ^T ( \nabla \dd)_{\skw} \dd\right )   \de \f x  \de s
\\
&+ \int_0^t \int_\Omega  \partial_s \dd  \cdot \left (  -\di ( \sk d \f d \otimes \f d )_{\skw} +  \sk d^T \sk d \f d  \right )\de \f x  \de s
\\
& + \int_{\Omega}  ( \f d(t)   - \dd (t) ) \cdot ( ( \nabla \f d (t)  )_{\skw}-( \nabla \dd(t)   )_{\skw})^T  ( \nabla \dd(t)   )_{\skw} \dd(t)   \de \f x  
\\
&{}-  \int_{\Omega}  ( \f d(0)   - \dd (0) ) \cdot ( ( \nabla \f d (0)  )_{\skw}-( \nabla \dd(0)   )_{\skw})^T   ( \nabla \dd(0)   )_{\skw} \dd(0)   \de \f x \\
& - \int_0^t \int_{\Omega} \left ( ( ( \nabla \f d   )_{\skw}-( \nabla \dd   )_{\skw}) ( \f d   - \dd  )\right ) \cdot   \partial_s \left (( \nabla \dd   )_{\skw} \dd  \right )   \de \f x  \de s\\
& + \int_0^t \int_{\Omega} (    ( \nabla \f d   )_{\skw}\f d -( \nabla \dd   )_{\skw}\dd   ) \cdot \left (( \nabla \partial _s \dd )_{\skw} ( \dd   -\f d ) + ( ( \nabla \dd   )_{\skw}-( \nabla \f d   )_{\skw})  \partial _s \dd  \right )  \de \f x  \de s \,.
\end{split} \label{intpart5}
\end{align}
This is the integration by  parts formula needed to estimate the relative energy.

The right-hand side of Eq.~\eqref{sum} 
is now estimated using 
  Eq.~\eqref{intpart1} and Eq.~\eqref{intpart5} as well as Young's inequality (%
see Ref.~\cite[Section~4.8]{inequalities}), as follows:
\begin{align}
\begin{split}
\mathcal{E}&[\f d (t) | \dd(t) ] +   \int_0^t \int_\Omega \left | \frac{\delta F }{\delta \f d }[\f d] - \frac{\delta F }{\delta \f d }[\dd] \right |^2 \de \f x  \de s 
\\ 
\leq {}&\mathcal{E}[\f d(0)| \dd(0)] - \int_0^t \int_\Omega \left (\partial _s \f d +  \frac{\delta F }{\delta \f d }[\f d] \right  ) \cdot \frac{\delta F }{\delta \f d }[\dd]\de \f x  \de s 
 - \int_0^t \int_\Omega \left ( \partial _s \dd +  \frac{\delta F }{\delta \f d }[\dd] \right ) \cdot \frac{\delta F }{\delta \f d }[\f d]\de \f x \de s \\
&
 -   \int_\Omega  ( \f d(t)   - \dd (t) ) \cdot ( ( \nabla \f d (t)  )_{\skw}-( \nabla \dd(t)   )_{\skw})^T ( \nabla \dd(t)   )_{\skw} \dd(t)    \de \f x 
  \\
  &+   \int_\Omega  ( \f d(0)   - \dd (0) ) \cdot ( ( \nabla \f d (0)  )_{\skw}-( \nabla \dd(0)   )_{\skw})^T ( \nabla \dd(0)   )_{\skw} \dd(0)    \de \f x 
  \\
&+\frac{1}{2} \int_0^t \max_{\f x \in \overline{\Omega}} \left (
| \dd(\f x )| | \nabla \partial_s \dd(\f x )| + | \nabla \dd(\f x )| | \partial_s \dd(\f x )| \right ) \int_{\Omega} \left (| ( \nabla \f d   )_{\skw}-( \nabla \dd   )_{\skw}|^2 + | \f d   - \dd  |^2  \right )\de \f x\de s\\
&+ \frac{1}{2}\int_0^t     \max_{\f x \in \overline{\Omega}}| \nabla \partial_s \dd(\f x)|    \int_\Omega \left (| \f d   - \dd  |^2 +|( \nabla \f d   )_{\skw}\f d -( \nabla \dd   )_{\skw}\dd  |^2 \right ) \de \f x \de s 
\\
&+\frac{1}{2} \int_0^t     \max_{\f x \in \overline{\Omega}}  | \partial_s \dd(\f x)| \int_\Omega \left ( | ( \nabla \f d   )_{\skw}-( \nabla \dd   )_{\skw}|^2 +|( \nabla \f d   )_{\skw}\f d -( \nabla \dd   )_{\skw}\dd  |^2 \right ) \de \f x \de s 
\,.
\end{split}\label{Ecalc}
\end{align}
The terms in the three last lines of Eq.~\eqref{Ecalc} are estimates of the terms in the two last lines of Eq.~\eqref{intpart5}.
To get the right-hand side of inequality~\eqref{Ecalc}, we explicitly used that Eq.~\eqref{calcrel} also holds for $t=0$ and that the variational derivative is given by
\begin{align*}
\frac{\delta F}{\delta \f d }[\f d] = - \Delta \f d -  \di ( \sk d \f d \otimes \f d )_{\skw} +  \sk d^T \sk d \f d
\end{align*} 
and similarly for $\dd$. 
Inserting Eq.~\eqref{simple}, using that $\f v$ is the same for both solutions%
, yields for the second and third term on the right-hand side of inequality~\eqref{Ecalc} that
\begin{align}
\begin{split}
 -{}&\int_0^t \int_\Omega \left (\partial _t \f d +  \frac{\delta F }{\delta \f d }[\f d] \right  ) \cdot \frac{\delta F }{\delta \f d }[\dd]\de \f x  \de s 
 - \int_0^t \int_\Omega \left ( \partial _t \dd +  \frac{\delta F }{\delta \f d }[\dd] \right ) \cdot \frac{\delta F }{\delta \f d }[\f d]\de \f x \de s  \\
={}&   \int_0^t \int_\Omega( \f v \cdot \nabla ) \f d   \cdot  \frac{\delta F }{\delta \f d }[\dd]\de \f x  \de s 
+  \int_0^t \int_\Omega ( \f v \cdot \nabla) \dd  \cdot \frac{\delta F }{\delta \f d }[\f d]\de \f x  \de s \,.
  \end{split}\label{eqeins}
\end{align}
 Due to Eq.~\eqref{divfrei}, we observe for the right-hand side of Eq.~\eqref{eqeins} that
 \begin{align}
 \begin{split}
&  \int_0^t \int_\Omega( \f v \cdot \nabla ) \f d   \cdot \frac{\delta F }{\delta \f d }[\dd]\de \f x  \de s 
 + \int_0^t \int_{\Omega} ( \f v \cdot \nabla) \dd  \cdot \frac{\delta F }{\delta \f d }[\f d]\de \f x  \de s \\
  {}& {}= \int_0^t \int_{\Omega}( \f v \cdot \nabla ) \f d   \cdot \left ( \frac{\delta F }{\delta \f d }[\dd]- \frac{\delta F }{\delta \f d }[\f d]\right ) \de \f x \de s 
+ \int_0^t \int_{\Omega} ( \f v \cdot \nabla) \dd  \cdot \left (\frac{\delta F }{\delta \f d }[\f d]- \frac{\delta F }{\delta \f d }[\dd]\right ) \de\f x \de s \\
  {}&{}= \int_0^t \int_\Omega  \f v \cdot \left ( \nabla \f d - \nabla \dd \right ) ^T \left ( \frac{\delta F }{\delta \f d }[\f d]-   \frac{\delta F }{\delta \f d }[\dd] \right ) \de \f x \de s\,.
 \end{split}\label{eqzwei}
 \end{align}
 Hence, the right-hand side of Eq.~\eqref{eqeins} can be estimated using  Young's inequality  (see Ref.~\cite[Section~4.8]{inequalities}), 
 \begin{align}
\begin{split}
&  \int_0^t \int_{\Omega}  \f v \cdot  \left ( \nabla \f d - \nabla \dd \right )^T \left (\frac{\delta F }{\delta \f d }[\f d]-   \frac{\delta F }{\delta \f d }[\dd] \right )\de \f x  \de s \\
 &\leq{}   \frac{1}{2}\int_0^t  \max_{\f x \in \overline{\Omega}}| \f v (\f x) |^2  \int_\Omega | \nabla \dd - \nabla \f d| ^2 \de \f  x \de s + \frac{1}{2}\int_0^t  \int_{\Omega}\left | \frac{\delta F }{\delta \f d }[\f d]-   \frac{\delta F }{\delta \f d }[\dd] \right |^2 \de \f x  \de s\,.
 \end{split}\label{eqdrei}
\end{align}
For the fourth term on the right-hand side of inequality~\eqref{Ecalc}, we observe by Young's inequality  (see Ref.~\cite[Section~4.8]{inequalities})
\begin{multline}
- \int_\Omega ( \f d(t)   - \dd (t) ) \cdot  ( ( \nabla \f d (t)  )_{\skw}-( \nabla \dd(t)   )_{\skw}) ^T  ( \nabla \dd(t)   )_{\skw} \dd(t)    \de \f x  \\
 \leq {} \frac{1}{4 } \int_\Omega| \nabla \f d (t) - \nabla \dd(t)|^2\de\f x +  \max_{( \f x ,t) \in \overline{\Omega}\times [0,T]}|  ( \nabla \dd (\f x ,t)  )_{\skw} \dd(\f x,t)|^2 \int_\Omega| \f d(t)   - \dd (t)|^2\de \f x \,.\label{pktwab}
\end{multline}
With respect to the last term in Eq.~\eqref{pktwab}, we observe with the fundamental theorem of calculus and Eq.~\eqref{simple} that 
\begin{align}
\begin{split}
\int_{\Omega}| \f d(t)   - \dd (t)|^2\de \f x  -{} \int_{\Omega}| \f d (0) - \dd(0)|^2\de \f x 
={}& \int_0^t \partial_s \int_\Omega | \f d - \dd|^2\de \f x \de s
 \\
= 
{}&  
 2 \int_0^t  \int_\Omega (\partial_s \f d - \partial_s \dd )\cdot ( \f d - \dd) \de \f x \de s
\\
={}&
- 2 \int_0^t  \int_\Omega ( \f v \cdot \nabla) (\f d-  \dd) \cdot ( \f d - \dd) \de \f x \de s
\\ &
 -2 \int_0^t \int_\Omega \left (\frac{\delta F }{\delta \f d }[\f d]-   \frac{\delta F }{\delta \f d }[\dd] \right ) \cdot  \left (  \f d - \dd     \right )\de \f x \de s \,.
 \end{split}\label{absch5}
\end{align}
The first term on the right-hand side of Eq.~\eqref{absch5} vanishes since $\f v$ is solenoidal and fulfills homogeneous Dirichlet boundary conditions.
With Young's inequality (see Ref.~\cite[Section~4.8]{inequalities}), we find
\begin{multline}
2\max_{( \f x ,t) \in \overline{\Omega}\times [0,T]}|( \nabla \dd (\f x ,t)  )_{\skw} \dd(\f x,t)|^2  \int_0^t\int_{\Omega} \left (\frac{\delta F }{\delta \f d }[\f d]-\frac{\delta F }{\delta \f d }[\dd] \right ) \cdot \left ( \f d - \dd     \right )\de \f x \de s 
\\ \leq  \frac{1}{2}\int_0^t \int_{\Omega} \left | \frac{\delta F }{\delta \f d }[\f d]-   \frac{\delta F }{\delta \f d }[\dd] \right |^2 \de s +2  \max_{( \f x ,t) \in \overline{\Omega}\times [0,T]}|( \nabla \dd (\f x ,t)  )_{\skw} \dd(\f x,t)|^4 \int_0^t  \int_{\Omega}| \f d   - \dd |^2\de \f x  \de s  \,. \label{absch6}
\end{multline}
Inserting Eqs.~\eqref{eqeins}, \eqref{eqzwei}, \eqref{eqdrei}, \eqref{pktwab} and~\eqref{absch6} 
back into Eq.~\eqref{Ecalc} gives
\begin{align}
\begin{split}
\mathcal{E}&[\f d (t) | \dd(t) ] +   \int_0^t \int_\Omega \left | \frac{\delta F }{\delta \f d }[\f d] - \frac{\delta F }{\delta \f d }[\dd] \right |^2 \de \f x  \de s
\\\leq{} &  \mathcal{E}[\f d(0)| \dd(0)]  
 + \frac{1}{4 } \int_\Omega| \nabla \f d (t) - \nabla \dd(t)|^2\de\f x + \frac{1}{2}\int_0^t \int_{\Omega} \left | \frac{\delta F }{\delta \f d }[\f d]-   \frac{\delta F }{\delta \f d }[\dd] \right |^2 \de s\\&  + 2\max_{( \f x ,t) \in \overline{\Omega}\times [0,T]}|( \nabla \dd (\f x ,t)  )_{\skw} \dd(\f x,t)|^4 \int_0^t  \int_{\Omega}| \f d   - \dd |^2\de \f x  \de s
 \\&+ \max_{( \f x ,t) \in \overline{\Omega}\times [0,T]}|( \nabla \dd (\f x ,t)  )_{\skw} \dd(\f x,t)|^2 \int_\Omega | \f d(0) - \dd(0)|^2 \de \f x 
  \\
  & +    \int_\Omega  ( \f d(0)   - \dd (0) ) \cdot ( ( \nabla \f d (0)  )_{\skw}-( \nabla \dd(0)   )_{\skw})^T ( \nabla \dd(0)   )_{\skw} \dd(0)    \de \f x 
  \\
& +\frac{1}{2} \int_0^t  \max_{\f x \in \overline{\Omega}}| \f v (\f x) |^2  \int_\Omega | \nabla \dd - \nabla \f d| ^2 \de \f  x \de s + \frac{1}{2}\int_0^t  \int_{\Omega}\left | \frac{\delta F }{\delta \f d }[\f d]-   \frac{\delta F }{\delta \f d }[\dd] \right |^2 \de \f x  \de s
\\
&+\frac{1}{2} \int_0^t \max_{\f x \in \overline{\Omega}} \left (
| \dd(\f x )| | \nabla \partial_s \dd(\f x )| + | \nabla \dd(\f x )| | \partial_s \dd(\f x )| \right ) \int_{\Omega} \left (| ( \nabla \f d   )_{\skw}-( \nabla \dd   )_{\skw}|^2 + | \f d   - \dd  |^2  \right )\de \f x\de s\\
&+\frac{1}{2} \int_0^t     \max_{\f x \in \overline{\Omega}}| \nabla \partial_s \dd(\f x)|    \int_\Omega \left (| \f d   - \dd  |^2 +|( \nabla \f d   )_{\skw}\f d -( \nabla \dd   )_{\skw}\dd  |^2 \right ) \de \f x \de s 
\\
&+ \frac{1}{2}\int_0^t     \max_{\f x \in \overline{\Omega}}  | \partial_s \dd(\f x)| \int_\Omega \left ( | ( \nabla \f d   )_{\skw}-( \nabla \dd   )_{\skw}|^2 +|( \nabla \f d   )_{\skw}\f d -( \nabla \dd   )_{\skw}\dd  |^2 \right ) \de \f x \de s 
\,.
\end{split}\label{Ecalc2}
\end{align}
The Poincar\'e inequality ensures that there exists a constant $c>0$, such that
\begin{align}
\int_{\Omega}| \f d   - \dd |^2 \de \f x \leq c \int_{\Omega}|\nabla  \f d   - \nabla \dd |^2 \de \f x\,.\label{poincare}
\end{align}
Note that the Poincar\'e inequality (see Ref.~\cite{poincare}) is applicable since $\f d$ and $\dd$ fulfill the same Dirichlet boundary conditions. 
Using the definition~\eqref{relenergy} of the relative energy shows that
\begin{align}
\begin{split}
\frac{1}{2}\mathcal{E}&[\f d (t) | \dd(t) ]  \de s \leq \mathcal{E}[\f d(0)| \dd(0)] 
 +   \int_\Omega \left | ( \f d(0)   - \dd (0) ) \cdot ( ( \nabla \f d (0)  )_{\skw}-( \nabla \dd(0)   )_{\skw})^T ( \nabla \dd(0)   )_{\skw} \dd(0)  \right |   \de \f x 
 \\&  + \max_{( \f x ,t) \in \overline{\Omega}\times [0,T]}|( \nabla \dd (\f x ,t)  )_{\skw} \dd(\f x,t)|^2 \int_\Omega | \f d(0) - \dd(0)|^2 \de \f x  \\
&  + 2 c \max_{( \f x ,t) \in \overline{\Omega}\times [0,T]}|( \nabla \dd (\f x ,t)  )_{\skw} \dd(\f x.t)|^4 \int_0^t \mathcal{E}[\f d  | \dd ]  \de s
\\
&+ (1+c)\int_0^t \max_{\f x \in \overline{\Omega}} \left (
| \dd(\f x )| | \nabla \partial_s \dd(\f x )| + | \nabla \dd(\f x )| | \partial_s \dd(\f x )|+ | \nabla \partial_s \dd(\f x)|    +  | \partial_s \dd(\f x)|+ | \f v (\f x) |^2  \right ) \mathcal{E}[\f d  | \dd ]\de s\,.
\end{split}\label{Ecalc3}
\end{align}
Gronwall's estimate (see Ref.~\cite{gronwall}) yields the asserted inequality~\eqref{relEn}.
\end{proof} 
 Proposition~\ref{prop} guarantees that the difference of two possible solutions $\f d$ and $\dd$ of Eq.~\eqref{simple} measured by the relative energy~\eqref{relenergy} depends continuously on the difference of their initial values. The growth of the relative energy is  controlled only by the difference of the initial values and one of the two solutions, \emph{i.e.}, $\dd $ and the prescribed velocity field $\f v$. 
 One consequence is the  weak-strong uniqueness property of the system, \emph{i.e.}, the solutions must coincide if one is sufficiently regular (compare with Eq.~\eqref{K}) and both  emanate from the same initial data.

\section{Open problems}
\label{sect-open}

In view of the substantial research progress discussed in the previous sections, it seems
appropriate to close this article by pointing out some open problems and challenging questions
 for the future.
 
 We start by taking the mathematical perspective. Whereas the stationary problem seems under control
 (as summarized, \emph{e.g.}, by Ball~\cite{Ball2}), the dynamical behavior has many aspects
 for which a detailed mathematical
 treatment is still missing 
 (see, \emph{e.g.},~Zhou \emph{et\,al}.~\cite{zhou}). 
 All dynamical models presented in this article consist of an evolution
  equation for the quantity describing the anisotropy (distribution function, ${Q}$-tensor, director), which is then
  coupled to the Navier--Stokes equations
   (see~Eqs.\eqref{instatdoi}-\eqref{navdoi}, \eqref{evo} and~\eqref{ELeq}).
Showing the well-posedness for the Navier--Stokes equations alone is still a Millennium problem~(see
 Fefferman~\cite{mill}) that seems to be out of reach (see Tao~\cite{tao}). 
Consequently, one cannot hope to show well-posedness of the coupled
 systems discussed in this article. However, as already indicated in the previous sections, there are recent mathematical achievements from which future investigations
 could start.
 
 Specifically, in case of the Doi--Hess model~\eqref{instatdoi}-\eqref{navdoi}, so far only local existence of weak solutions has been shown (see~Zhang and Zhang~\cite{existinstatdoi}). 
 Here, global existence of solutions would be helpful
for implementing a suitable numerical (finite-element) scheme. 
For the $Q$-tensor model and the Ericksen--Leslie model, there are several results on global existence of generalized solutions. However, these have been proved for the special case of a free energy functional
involving the one-constant
 approximation (see, \emph{e.g.}, Paicu and Zarnescu~\cite{existence2} for the $Q$-tensor
  model~\eqref{evo} combined with the Landau--de Gennes free energy~\eqref{landau}, Wilkinson~\cite{mark}
   for the $Q$-tensor model~\eqref{evo} combined with the Ball--Majumdar free energy~\eqref{BM} and
    Cavaterra, Rocca and Wu for the Ericksen--Leslie model~\cite{allgemein}). 
    Recently, Huang and Ding~\cite{HuangDing} have suggested a partial generalization for the $Q$-tensor model~\eqref{evo},
      and a similar generalization has been carried out in Emmrich and Lasarzik~\cite{unsere} in case of the Ericksen--Leslie equation.  
      It would be very interesting to extend these analytical results towards more general free energies away from the one-constant approximation such as the full Landau--de Gennes expression~\eqref{landau} for the $Q$-tensor model or the full Oseen expression Eq.~\eqref{oseen} for the Ericksen--Leslie model  (see~Lasarzik~\cite{unseremasse}). 
Again, for these models global existence of generalized solutions would be 
desirable.

Progress in these directions is of major importance also from the perspective of soft condensed matter physics.
Indeed, as already mentioned in the introduction, there are many studies
where equations of the type discussed in this article, and even more complicated variants, are numerically solved without (mathematically confirmed) knowledge about solutions. 
For example, ${Q}$-tensor theories are nowadays often used to model {\em hybrid} systems involving additional degrees of freedom, such as
liquid crystals with embedded colloids (see Ref.~\cite{ravnik}), polar liquid crystals (see Ref.~\cite{Ilg2}), branched polymers (see Ref.~\cite{Ilg3}),

and ferrogels (see Ref.~\cite{menzel}).
%
Moreover, combining ${Q}$-tensor models
with activity terms (see Refs.~\cite{simha,heidenreich})
 has led to a boost of studies targeting the collective behavior of active systems such as bacterial or artificial 
microswimmers (see, \emph{e.g.}, Refs.~\cite{hemingway,heidenreich2}).
To our knowledge, these extensions of the theory have not been considered by mathematicians at all.

 Finally, we briefly comment on future applications of the relative energy approach introduced in Sec.~\ref{sect-rela}.
The presented result (Prop.~\ref{prop}) is a first step to generalize the concept of relative energy inequalities to systems with nonconvex energy functionals.
This technique and associated results can hopefully be transferred to other systems with similar properties. 
A result implicated by the relative energy inequality is the so-called weak-strong uniqueness: 
In the case that a solution admitting additional regularity exists, it coincides with all other possible generalized solutions emanating form the same initial data.
 Indeed, it can be immediately inferred from Proposition~\ref{prop} that the relative energy vanishes if the initial values of the two solutions $\f d $ and $\dd$ coincide and $\dd$ fulfills the additional regularity requirements in Eq.~\eqref{K}.
  Thus, the solutions $\f d$ and $\dd$ coincide as long as the more regular solution $\dd$ exists.  
  The relative energy inequality~\eqref{relEn} can also be used to derive a weakened solution formalism, the so-called dissipative solution, where only an inequality is assumed to be fulfilled by the solution (see Ref.~\cite{lionsfluid}). Beside these mathematical questions, it is also possible to derive \emph{a posteriori} estimates to bound modeling errors as well as errors due to numerical approximation (see Ref.~\cite{fischer}).

\section*{Acknowledgments}

We gratefully acknowledge financial support from the Deutsche Forschungsgemeinschaft through
the Collaborative Research Center~901 "Control of self-organizing nonlinear systems: Theoretical methods and concepts of application" (projects A8 and B2).


\small

\end{document}